
\ifdefined\OPRE

\documentclass[opre,nonblindrev]{informs3}
\DoubleSpacedXI 

\ECRepeatTheorems

\EquationsNumberedThrough    

\MANUSCRIPTNO{OPRE-2020-07-507} 

\else
\documentclass[12pt]{article}

\newcommand{\ABSTRACT}[1]{
\begin{abstract}
#1
\end{abstract}
}

\usepackage[a4paper,margin=1in]{geometry}

\fi



\usepackage{breakcites}
\usepackage{amssymb,latexsym}
\usepackage{amsmath,graphicx}

\usepackage{graphics}
\usepackage{graphicx}
\usepackage{color}
\usepackage{euscript}

\usepackage{gensymb} 
\usepackage{csvsimple}  

\usepackage{mdframed}

\usepackage{hyperref}
\hypersetup{colorlinks=true,linkcolor=blue,citecolor=blue}

\numberwithin{equation}{section}

\newcommand{\range}[2]{\in\{#1,\dots,#2\}}

\newcommand{\dcg}[1]{\ensuremath{\overrightarrow{\EuScript{CG}}_{\mathbf{#1}}}}  
\newcommand{\ucg}[1]{\ensuremath{{\EuScript{CG}}_{\mathbf{#1}}}}  
\newcommand{\shar}[1]{\ensuremath{{\#\mathrm{shar}}_{\mathbf{#1}}}} 
\newcommand{\D}[1]{\ensuremath{{\mathrm{D}}_{\mathbf{#1}}}} 

\newcommand{\G}[1]{\ensuremath{{\mathcal{G}}_{\mathbf{#1}}}} 

\definecolor{ForestGreen}{rgb}{.13,.54,.13}
\definecolor{darkForestGreen}{rgb}{.05,.30,.05}
\definecolor{BrickRed}{rgb}{.80,.26,.33}
\definecolor{darkBrickRed}{rgb}{.70,.13,.16}


\ifdefined\DRAFT
\definecolor{ForestGreen}{rgb}{.13,.54,.13}
\newcommand{\erel}[1]{\textcolor{ForestGreen}{(\textbf{Erel:} #1)}}

\newcommand{\fed}[1]{\textcolor{red}{(\textbf{Fedor:} #1)}}
\else
\newcommand{\erel}[1]{}

\newcommand{\fed}[1]{}
\fi

\usepackage{natbib}
 \bibpunct[, ]{(}{)}{,}{a}{}{,}%
 \def\BIBand{and}%

  \let\theoremstyle\relax \usepackage{amsthm} 

\newtheorem{theorem}{Theorem}[section]
\newtheorem{lemma}[theorem]{Lemma}
\newtheorem{proposition}[theorem]{Proposition}
\newtheorem{corollary}[theorem]{Corollary}
\theoremstyle{definition}
\newtheorem{definition}[theorem]{Definition}
\newtheorem{remark}[theorem]{Remark}


\begin{document}

\ifdefined\OPRE


\RUNAUTHOR{Sandomirskiy and Segal-Halevi}

\RUNTITLE{Efficient Fair Division with Minimal Sharing}

\TITLE{Efficient Fair Division with Minimal Sharing}

\ARTICLEAUTHORS{%
\AUTHOR{Fedor Sandomirskiy}
\AFF{California Institute of Technology (Caltech), Pasadena, California, United States of America; 
Higher School of Economics, St.Petersburg, Russia;
\EMAIL{fsandomi@caltech.edu}} 
\AUTHOR{Erel Segal-Halevi}
\AFF{Ariel University, Ariel, Israel; \EMAIL{erelsgl@gmail.com}}
} 

\KEYWORDS{fair division; polynomial-time algorithm; discrete objects; fractional Pareto optimality; envy-freeness; proportional fairness}

\else

\title{Efficient Fair Division with Minimal Sharing}
\author{Fedor Sandomirskiy\footnote{Technion, IE\&M, Haifa, Israel; 
Higher School of Economics, St.Petersburg, Russia}
~and
Erel Segal-Halevi\footnote{
Ariel University, Ariel 40700, Israel
}
}
\maketitle

\fi

\ABSTRACT{
A collection of objects, some of which are good and some are bad, is to be divided fairly among agents with different tastes, modeled by additive utility functions.
If the objects cannot be shared, so that each of them must be entirely allocated to a single agent, then a fair division may not exist.
What is the smallest number of objects that must be shared between two or more agents in order to attain a fair and efficient division?

In this paper, fairness is understood as proportionality or envy-freeness, and efficiency, as fractional Pareto-optimality.
We show that, for a generic instance of the problem (all instances except a zero-measure set of degenerate problems), a \emph{fair fractionally Pareto-optimal division with the smallest possible number of shared objects can be found in polynomial time}, assuming that the number of agents is fixed. The problem becomes computationally hard for degenerate instances, where  agents' valuations are aligned for many objects.
}

\ifdefined\OPRE
\maketitle
\fi

\newpage

\section{Introduction.}
\label{sec:intro}

How to divide an inheritance among heirs or common assets among former partners in a fair way? This question becomes especially difficult if the division is to be made solely based on participants' individual preferences, while monetary compensations are excluded as it is often the case when the divided objects have emotional value for the participants.	
	Such questions are studied by fair division, a research area at the interface of economics and computer science; see a recent survey by \cite{moulin2019fair}.   What does it mean that objects are allocated fairly?  The two dominating notions of fairness in economic literature are \emph{envy-freeness}, which forbids the situation where some agent prefers the allocation received by somebody else to his/her own, and \emph{proportionality} requiring each agent to be as happy with  their own allocation as he/she would be in the equal division.
	
In economics, objects  are traditionally assumed to be some divisible resources. Under this assumption, fair allocations exist (e.g., the equal division), and, in a variety of settings, fairness can be combined with economic efficiency.
However, if the objects are indivisible, it may be impossible to allocate them fairly --- consider a single valuable  object and two people. 
Computer science literature focuses on indivisible objects and suggests looking for an \emph{approximately-fair} allocation in order to restore the existence. There are several definitions of approximate fairness, the most common of which is \emph{envy-freeness up to one object}.
An alternative approach to circumvent indivisibility, popular among economists, is to make indivisible objects ``divisible'' via randomization, ensuring that the division is fair ex-ante (in expectation).

While approximate or ex-ante fairness are reasonable when allocating low-value objects, such as seats in a course or in a school, they are not {suitable} for high-value objects, e.g., houses or precious jewels. Think of 
two siblings who have to divide three houses among them; it is unlikely that one of them would agree to receive a bundle that is envy-free up to one 
house, or a lottery that gives either one or two houses with equal probability.

Practical cases, however, fall in between the two extreme cases of divisible and indivisible objects. Usually, the objects can technically be shared among participants and, hence, can be treated as divisible, however, such sharing is  {undesirable}. For example, a shop can be jointly owned by several partners, sharing the costs and revenues of operation. A house can be jointly owned by several people, who live in the  house alternately. However, sharing may be quite inconvenient due to the overhead in managing the shared property. Unfortunately, sharing is inevitable for achieving exact fairness. Therefore, minimizing the number of objects  that have to be shared in a fair allocation becomes in practice an important concern.

\paragraph*{\textbf{Sharing minimization.}}
We propose a new approach to the problem of fair division:
\begin{center}
\emph{Treat objects as divisible, make sharing-minimization the objective,\\ and consider fairness and economic efficiency as constraints.}
\end{center}
This approach is a compelling alternative to approximate fairness when the objects to be divided are highly valuable and sharing is technically possible (as in all examples above and many other real-life situations) but unwanted.

We consider problems where the divided objects  may contain both goods and bads, as in the practice of partnership dissolution where valuable assets are often divided together with liabilities. We assume that agents  have additive  utilities, so the problem can be represented by a \emph{valuation matrix}, recording the value of each object to each agent. Such a ``bidding language'' is rather restrictive and does not allow to express  complementarity between objects (e.g., a garage becomes more valuable together with a car); however, it is a standard choice both in theory and practice since such preferences are easy to formulate and report~\citep{Goldman2015Spliddit}.

We assume that the number of agents $n$ is a fixed small number, as is common in inheritance cases and divorces, so the problem size is determined by the number of objects,  denoted by $m$.

We focus on the two classic fairness notions: \emph{envy-freeness} (each agent weakly prefers his/her bundle to the bundle of any other agent) and \emph{proportionality} (each agent gets a bundle worth at least $1/n$ of the total value, where $n$ is the number of agents). Economic efficiency is captured by  \emph{fractional Pareto optimality}: no other allocation, even without any restriction on sharing, improves the well-being of some agent without harming others. Note that even if no objects are shared in a given allocation  (hence, it can be considered as an allocation of indivisible objects), the  requirement of {fractional Pareto optimality} compares it to all divisible allocations. The word ``fractional'' is meant to stress this fact and contrast it to the requirement of discrete Pareto optimality which is popular in the literature on indivisible objects and is discussed below.

To the best of our knowledge, the idea of sharing minimization has not appeared in the literature. However, there are  known worst-case and average-case bounds on the number of shared objects; the gap between them {shows that there is} a significant room for optimization.
 For goods or bads (but not a mixture), \citet{bogomolnaia2016dividing} (Lemma~1) show that an envy-free fractionally Pareto-optimal allocation with at most $n-1$ sharings always exists. This upper bound is tight: when there are $n-1$ identical goods, all of them must be shared. 
 \citet{Dickerson2014Computational} considered a random instance with large number $m$ of indivisible goods and demonstrated that the allocation having the highest utilitarian welfare (achieved by allocating each object entirely to an agent with highest utility) is envy-free with high probability. Consequently, for large $m$, a fractionally Pareto-optimal allocation with $0$ sharings is likely to exist. These theoretical results suggest that in problems with many objects there is room for optimizing the number of sharings. Our computational experiments 
 on the real data from the fair division platform \url{spliddit.org} demonstrate that the number of sharings can often be optimized even for small $n$:
 more than 
 50\% of the 2-agents instances, 
 80\% of the 3-agents instances, 
 and 95\% of the 4-agents instances, 
 admit a proportional and fractionally Pareto-optimal allocation with fewer sharings than prescribed by the worst-case upper bound of~$n-1$
 (see Figure \ref{fig:spliddit-sharing}).

\paragraph*{\textbf{The results.}}
We consider
the algorithmic problem of
finding a fair and fractionally Pareto-optimal allocation
that minimizes the number of sharings. 
We find that the computational hardness of the problem depends on the 
\emph{degree of degeneracy} of the valuation matrix.
The degree of degeneracy measures how 
close the agents' valuations are to being identical.
Formally, it is the minimal number $k$ such that for each pair of agents, at most $k+1$ objects have the same ratio of agents' values.
The degree of degeneracy ranges between $0$ (for non-degenerate valuations) and $m-1$ (for identical valuations).

We demonstrate the following dichotomy (Theorem~\ref{thm:nagents-different-easy}):
\begin{itemize}
\item 
	Minimizing the number of sharings is \emph{algorithmically tractable if the degree of degeneracy of the valuation matrix is at most logarithmic in $m$}, 
	in particular, for non-degenerate valuations,
	we present an algorithm with a runtime \emph{polynomial in $m$} when $n$ is fixed.
\item Minimizing the number of sharings is
	\emph{NP-hard for any fixed $n\geq 2$ if the degree of degeneracy is at least of the order of $m^\alpha$ for some $\alpha>0$}, in particular, for identical valuations.
\end{itemize}

Since the set of valuations with positive degree of degeneracy has zero measure, our algorithm runs in polynomial time for almost all instances. However, its  runtime is exponential in the worst case, which is unavoidable due to the NP-hardness result. Despite this theoretical hardness, computational experiments on real data demonstrate the practical relevance of the algorithm.

When neither $n$ nor $m$ are fixed, a complementary hardness result is obtained in a follow-up paper by \citet{misra2021fair}. 
They demonstrate  that, for instances with degeneracy $1$, finding an fPO and EF allocation with minimal sharing (or even deciding the existence of an allocation with $0$ sharings) is NP-hard.

\paragraph*{\textbf{{Our methods and surprising sources of computational hardness. }}}
Our results
confirm the common sense that \emph{computationally-hard instances of resource-allocation problems are rare} and, in practice, such problems can often be  solved efficiently.

However, the fact that \emph{computationally-hard instances are those in which agents have identical valuations} 
is quite surprising.   
In many previous papers on fair division (e.g., \citet{oh2019fairly,plaut2018almost}), computational hardness results are presented with a qualifier saying that the problem is hard ``\emph{even} when the valuations are identical''; our results show that the ``even'' is unwarranted because {in our model} hard instances are exactly those with identical valuations.

Another observation that may seem surprising is that \emph{finding a fair and fractionally Pareto optimal allocation with minimal sharing is computationally easier than just fair with minimal sharing (without Pareto-optimality)}. The underlying reason is that, for non-degenerate problems, fractional-Pareto-optimality is a strong condition that shrinks the search space to a polynomial number of structures (see Proposition~\ref{prop:nagents-po} for a formal statement).
Without fractional Pareto-optimality, sharing-minimization becomes NP-hard, see Remark~\ref{rem_without_fPO_hard}. 
Sharing minimization without Pareto optimality is discussed in a follow-up paper \citep{segal2019fair}.

The polynomial ``size'' of the Pareto frontier for non-degenerate problems is the key observation, which allows us to conduct the exhaustive search over fPO  allocations. A similar insight underlies the algorithms of~\cite{devanur_kannan2008market} and~\cite{branzei2019chores} for computing equilibrium allocations of Fisher markets. We demonstrate the universal power of these ideas by showing the first application beyond equilibria of exchange economies. A follow-up paper \citep{aziz2020polynomial} builds on our results and demonstrates that the approach can also be used to design polynomial-time algorithms for approximately-fair fPO allocations of indivisible items.
The dynamic programming approach that we use to enumerate the Pareto frontier is more intuitive than that of~\cite{branzei2019chores} and does not rely on the ``black-box'' of cell-enumeration technique used by~\cite{devanur_kannan2008market}.
Another related observation is known within the framework of smoothed analysis of NP-hard problems: the Pareto frontier for a knapsack problem and its extensions becomes polynomially-sized if the instance is randomly perturbed (see \citet{Moitra_2011} for a survey).

Note the important \emph{contrast between fractional and discrete Pareto-optimality}, which is the dominant notion of economic efficiency in the literature on indivisible objects (an allocation is discrete Pareto-optimal if it is not dominated by any allocation with zero sharings). For discrete Pareto-optimality, even basic questions are computationally hard: deciding whether a given allocation is discrete-PO is co-NP complete, and deciding whether there exists an envy-free discrete-PO allocation is $\Sigma_2^p$-complete
\citep{deKeijzer2009Complexity}.
In contrast, deciding whether a given allocation is fractional-PO is polynomial in $m$ and $n$ (see Lemma~\ref{lem:po-check}), and deciding whether there exists an envy-free and fractionally PO allocation with no sharings is, for almost all instances, polynomial in $m$ for fixed $n$ (Theorem~\ref{thm:nagents-different-easy}).
These observations suggest that \emph{fractional-Pareto-optimality is a compelling concept of economic efficiency even for indivisible objects};
recent results by \cite{barman2018proximity,barman2018finding} support this conclusion.

Checking fractional Pareto-optimality efficiently  is a critical building block of our approach. An algorithm provided by Lemma~\ref{lem:po-check} runs in  strongly-polynomial time
	even if neither the number of agents $n$ nor the number of objects $m$ are bounded; this algorithm is  used to enumerate the Pareto frontier in strongly-polynomial time for fixed $n$.
	The algorithm is based on the equivalent ``dual'' condition for fractional Pareto-optimality: absence of profitable cyclic trades (Lemma~\ref{lem:po-cycle}). This dual characterization is known in the case of goods~\citep{bogomolnaia2016dividing, barbanel2005geometry} and in the case of bads~\citep{branzei2019chores}, but not in the case of a mixture. Extension to the mixture is not straightforward and requires a rather tricky definition of a weighted consumption graph, see Subsection~\ref{sub:agent-object-graphs}. As far as we know, algorithmic implications of the characterization are new even in the case of pure goods or pure bads, e.g.,  Lemma~\ref{lem:po-n-1}. We note that a weakly polynomial algorithm for checking fractional Pareto optimality can be deduced from the classic connection between fractional Pareto optimality and weighted welfare maximization; see Lemma~\ref{lem:po-weights} and the discussion around.

\paragraph*{\textbf{Fisher markets and competitive allocations.}} 
Fair division with divisible goods is tightly connected to the literature on competitive equilibria of exchange economies known as Fisher markets; see literature review in Section~\ref{sec:related}.  In such markets, agents are endowed with budgets of some artificial currency with no intrinsic value, and the goal is to find prices such the market clears when each agent purchases the most preferred among affordable bundles. The corresponding allocations are called competitive equilibria (CE).
\cite{Varian1974Equity} suggested using CE with equal budgets  as a fair division rule; it is denoted by CEEI since budgets are sometimes referred to as incomes. 
For general convex preferences, CEEI exists, it is envy-free and fractionally Pareto optimal; under the additional assumption of homogeneity, it maximizes the product of agents' utilities (the Eisenberg-Gale convex optimization problem). Both properties, convexity and homogeneity, are satisfied by additive valuations considered in our paper.
CEEI was extended to bads and a mixture of goods and bads by \cite{Bogomolnaia2017Competitive}.

 Any alternative approach to fair division of divisible objects is natural to compare with the  benchmark of CEEI. This benchmark is usually hard to  beat, yet our simulations on the data from  \url{spliddit.org} demonstrate that  the sharing-minimization approach  improves upon CEEI. 
{Namely, we find a proportional fPO allocation with fewer sharings in more than 36\% of the instances and envy-free fPO allocation, in more than 23\%  of them
 (see Table \ref{tab:compare-sharings}).}
This should not be surprising since  sharing minimization finds the best allocation among all fair fractionally Pareto optimal allocations, while CEEI picks a particular point in this set.
 
 By the First and Second Welfare Theorems, the set of fractionally Pareto optimal allocations and the set  of CE with arbitrary budgets coincide. Hence, sharing minimization can be reinterpreted as the optimization in the space of budgets with the number of sharings as the objective and fairness (either envy-freeness or proportionality) as a constraint. Unfortunately, nothing is known about the structure of the set of budgets such that  CE is fair except that it contains equal budgets. This obstacle precludes using the connection to CE to minimize sharing algorithmically. 
 
 Since the sets of CE and  fractionally Pareto optimal allocations coincide, all the structural results for the latter, such as Lemma~\ref{lem:po-cycle} and \ref{lem:po-n-1}, translate to the former and vice versa. In particular, undirected consumption graphs of fractionally Pareto optimal allocations correspond to the so-called maximal bang per buck (MBB) graphs for the Fisher market. Consequently, the algorithm enumerating all consumption graphs (Proposition~\ref{prop:nagents-po}) can be used to find all MBB graphs. A similar idea for enumeration of MBB graphs was employed by \cite{branzei2019chores} to compute CE for bads, where it no longer solves a convex optimization problem. 
 
 For goods, it is known that the consumption graph of a CE is acyclic or can be made acyclic via cyclic trades leaving all agents indifferent \citep{orlin2010improved,barman2018proximity}. For goods, combining this result with the Second Welfare Theorem, one can deduce the non-algorithmic part of Lemma~\ref{lem:po-n-1} and the ``only if'' part of Lemma~\ref{lem:po-cycle}. However, there is a common obstacle for obtaining the algorithmic part of Lemma~\ref{lem:po-n-1}, the ``if'' part Lemma~\ref{lem:po-cycle},  as well as  Lemma~\ref{lem:po-check} (algorithmic corollary of Lemma~\ref{lem:po-cycle})
 using the market approach.	Cyclic trades considered in the market literature are determined by equilibrium prices \citep[proof of Claim 2.2.]{barman2018proximity}, but neither prices nor budgets are given in the lemmas mentioned above. For mixed problems, the connection to the Fisher market becomes even less helpful since it is not known  if the Second Welfare Theorem holds in this case, and the First Welfare Theorem may be violated depending on nuances in the definition of CEEI \citep{Bogomolnaia2017Competitive}. For this reason, our proofs do not use  Fisher market techniques even though some claims might be proved via this alternative approach.

 \paragraph*{\textbf{Goods, bads, and mixed problems.}} Starting from the paper by \citet{Bogomolnaia2017Competitive}, it has been understood that  problems with bads (or a mixture of goods and bads) are structurally different from those with goods; see  \cite{moulin2019fair} for a survey. For example, there are impossibility results specific for  bads, CEEI for bads becomes multi-valued and no longer solves a convex optimization problem,  approximate-fairness guarantees for indivisible objects often differ in the case of goods and bads.
 
 With all this evidence contrasting fair division of goods with that of bads, our results provide an exception. Although the presence of both goods and bads complicates the constructions, it does not lead to conceptual obstacles and does not alter the results.

\paragraph*{\textbf{Structure of the paper.}}
In {\bf Section}~\ref{sec:model}, we introduce the notation and describe useful tools such as characterizations of fractional Pareto-optimality and worst-case bounds on the number of sharings. While most of these results are known for goods, extension to mixed problems is not straightforward.
Our results about sharing-minimization  are contained in {\bf Section}~\ref{sec:po-fair-algorithm}.  \textbf{Section~\ref{sec:simulations}} describes computational experiments on  real data from \url{spliddit.org}.
Related work is surveyed in \textbf{Section~\ref{sec:related}}. \textbf{Appendices~\ref{sec:po-cycle}} and~\textbf{\ref{sec:po-n-1}} are devoted to omitted proofs.

\section{Preliminaries.}
\label{sec:model}
\subsection{Agents, Objects and Allocations.}
There is a set $[n] = \{1,\ldots,n\}$ of $n$ agents and a set $[m]=\{1,\ldots,m\}$ of $m$ {divisible} objects.
A \emph{bundle} $\mathbf{x}$ of objects is a vector $(x_o)_{o\in [m]}\in[0,1]^m$, where the component $x_o$ represents the portion of $o$ in the bundle  (the total amount of each object is normalized to $1$).

Each agent $i\in[n]$ has an \emph{additive} utility function over bundles: $
u_i(\mathbf{x}) =
\sum_{o\in [m]}  v_{i,o}\cdot x_{o}
$. Here $v_{i,o}\in \mathbb{R}$ is agent $i$'s value of receiving the whole object $o\in[m]$; the matrix $\mathbf{v}=(v_{i,o})_{i\in [n],o\in[m]}$ is called the \emph{valuation matrix}; it encodes the information about agents' preferences and is used below as the input of fair division algorithms.

We make no assumptions on valuation matrix $\mathbf{v}$ and allow values of mixed signs: for example, the same object $o$ can bring positive value to some agents and negative to others. We say that an object $o$ is:
\begin{itemize}
\item a \emph{bad} if $v_{i,o}<0$ for all $i\in[n]$.
\item \emph{neutral} 
if $v_{i,o}=0$ for at least one $i\in [n]$ and $v_{j,o}\leq 0$ for all $j\in[n]$;
\item a \emph{good} if $v_{i,o}>0$ for at least one $i\in [n]$;
\begin{itemize}
\item a \emph{pure good} if $v_{i,o}>0$ for all $i\in [n]$;
\end{itemize}
\end{itemize}
Note the asymmetry in this notation: an object is a bad if \emph{everyone} think that it is bad, but an object is a good if at least \emph{one} agent thinks so. In other words, an object is a good if it can be allocated in a way that gives the owner positive utility, while a bad harms any owner.

An \emph{allocation} $\mathbf{z}$ is a collection of bundles $(\mathbf{z_i})_{i\in [n]}$, one for each agent, with the condition that all the objects are fully allocated. An allocation can be identified with the matrix $\mathbf{z} := (z_{i,o})_{i\in[n],o\in[m]}$  such that all $z_{i,o}\geq 0$ and $\sum_{i\in [n]} z_{i,o} = 1$ for each $o\in[m]$.

The \emph{utility profile} of an allocation  $\mathbf{z}$ is the vector $\mathbf{u(z)} := (u_i(\mathbf{z}_i))_{i\in [n]}$.

\subsection{Fairness and efficiency concepts.} The two fundamental notions of fairness, taking preferences of agents into account, are \emph{envy-freeness} and a weaker concept of \emph{proportionality} (also known as \emph{equal split lower bound} or \emph{fair share guarantee}).

An allocation $\mathbf{z}=(z_i)_{i\in[n]}$ is called \emph{envy-free (EF)} if every agent weakly prefers his/her bundle to the bundles of others. Formally, for all $i,j\in[n]$: $u_i(\mathbf{z}_i) \geq u_i(\mathbf{z}_j)$.

An allocation $\mathbf{z}$ is \emph{proportional (PROP)} if each agent prefers his/her bundle to the equal division: $\forall i\in[n]$ 
$u_i(\mathbf{z}_i) \geq \frac{1}{n}\sum_{o\in[m]}v_{i,o}$. Every envy-free allocation is also proportional; with $n=2$ agents, envy-freeness and proportionality are equivalent.

The idea that the objects must be allocated in an efficient, non-improvable way is captured by Pareto-optimality. An allocation $\mathbf{z}$ is \emph{Pareto-dominated} by an allocation $\mathbf{y}$ if $\mathbf{y}$ gives at least the same utility to all agents and strictly more to at least one of them.

An allocation $\mathbf{z}$ is \emph{fractionally Pareto-optimal (fPO)}
if no feasible $\mathbf{y}$ dominates it.
We note that the literature on indivisible objects considers a weaker notion of economic efficiency:  $\mathbf{z}$ is \emph{discrete Pareto-optimal} if it is not dominated by any feasible $\mathbf{y}$  with $y_{i,o}\in\{0,1\}$ for every agent $i$ and object $o$. While fractional-Pareto-optimality has good algorithmic properties, its discrete version does not, see the discussion in~Section~\ref{sec:intro}.

We will also need the following extremely weak but easy-to-check efficiency notion: an  allocation $\mathbf{z}$ is  \emph{non-malicious} if each good $o$ is consumed by agents $i$ with $v_{i,o}>0$
and each neutral object $o$ by agents $i$ with $v_{i,o}=0$. Non-malicious allocations impose no restriction on the allocation of bads.
Every fPO allocation is clearly non-malicious.

\subsection{Agent-object Graphs and a Characterization of fPO.}
\label{sub:agent-object-graphs}
Our algorithms use several kinds of \emph{agent-object graphs} --- bipartite graphs in which the nodes on one side are the agents and the nodes on the other side are the objects:

In the \emph{(undirected) consumption-graph \ucg{z}} of an allocation $\mathbf{z}$, there is an edge between agent $i\in[n]$ and object $o\in[m]$ iff $z_{i,o} > 0$.
	
The \emph{weighted directed consumption-graph \dcg{z}} of  $\mathbf{z}$
is constructed as follows.
There is an edge $(i\to o)$ with weight $w_{i \to o}=|v_{i,o}|$ if one of the two condition holds:
	\begin{itemize}
		\item $z_{i,o}>0$ and $v_{i,o}\geq 0$ (agent $i$ consumes a non-zero amount of $o$ and thinks that $o$ is not a bad)
		 \item $z_{i,o}<1$ and $v_{i,o}<0$ (a positive fraction of $o$ is not consumed by $i$, who treats $o$ as a bad).
	\end{itemize}	
The opposite edge $(o\to i)$ with weight $w_{o \to i}=\frac{1}{|v_{i,o}|}$ is included in  \dcg{z} in one of the two cases:
	\begin{itemize}
	\item $z_{i,o}>0$ and $v_{i,o}<0$ (agent $i$ consumes a non-zero amount of $o$ and perceives $o$ as a bad)
	\item $z_{i,o}<1$ and $v_{i,o}>0$ (a positive fraction of $o$ is not consumed by $i$, who thinks that $o$ is a good).
\end{itemize}

Intuitively, \dcg{z}
 captures the structure of possible exchanges in which agents may engage.
 Outgoing edges represent those objects that an agent $i$ 
 can use as a ``currency'' to pay others: either goods $i$ owns or bads owned by somebody else (in this case $i$ pays $j$ who owns a bad $b$ by taking some portion of $b$). Similarly, the incoming edges represent those objects that the agent can accept as a payment. A transaction involving an object $o$ and agents $i$ and $j$ such that $i$'s utility weakly decreases and $j$'s utility strictly increases is possible  if there are edges $i\to o\to j$.

 Incoming edges are those objects that $i$ is ready to accept as a currency: either to receive a valuable good, or to diminish $i$'s own bad.
An example is shown in Figure \ref{fig:consumption-graph}.

\begin{figure}
\begin{center}
\includegraphics[width=6cm]{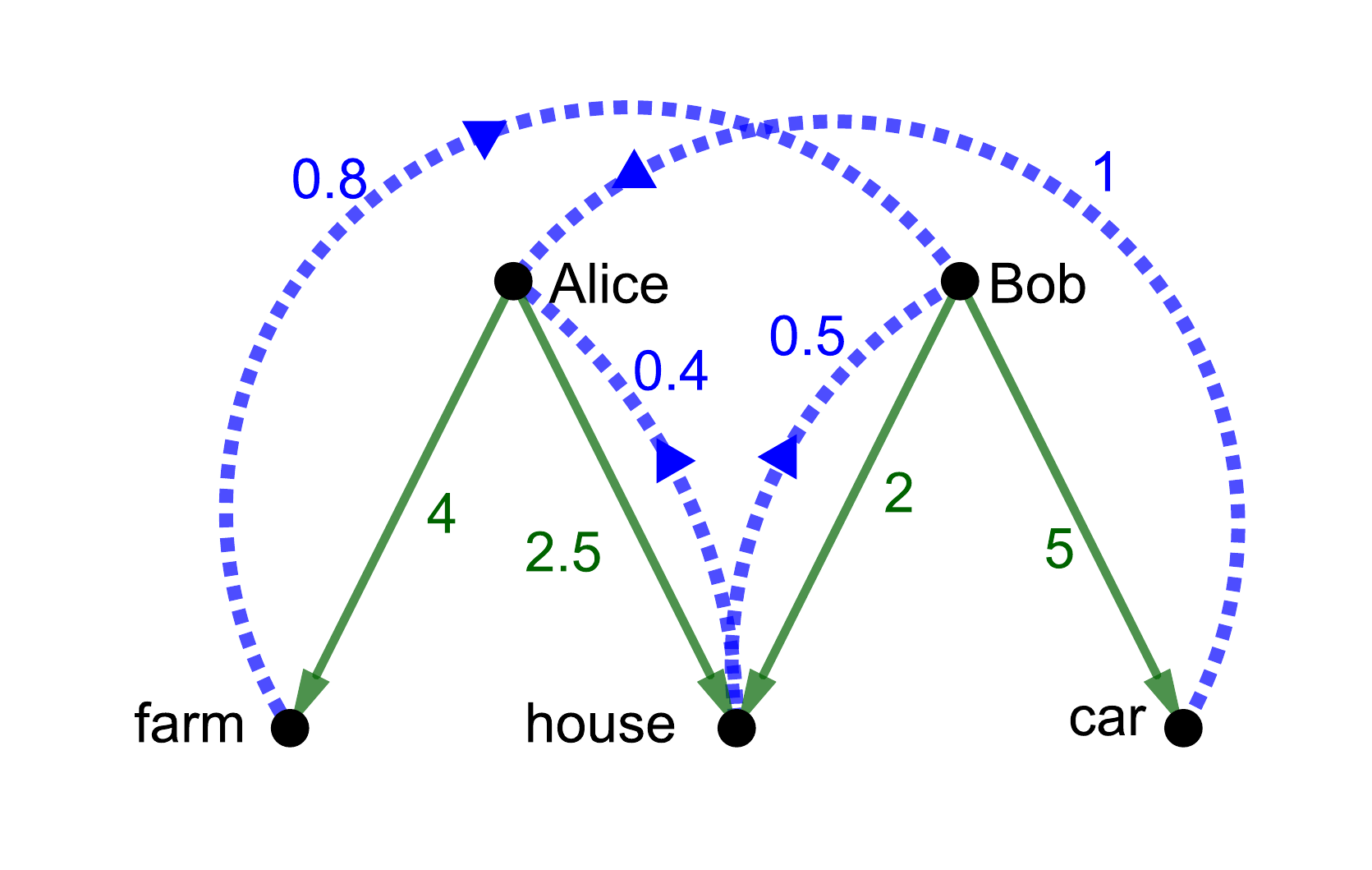}
~~~
\includegraphics[width=6cm]{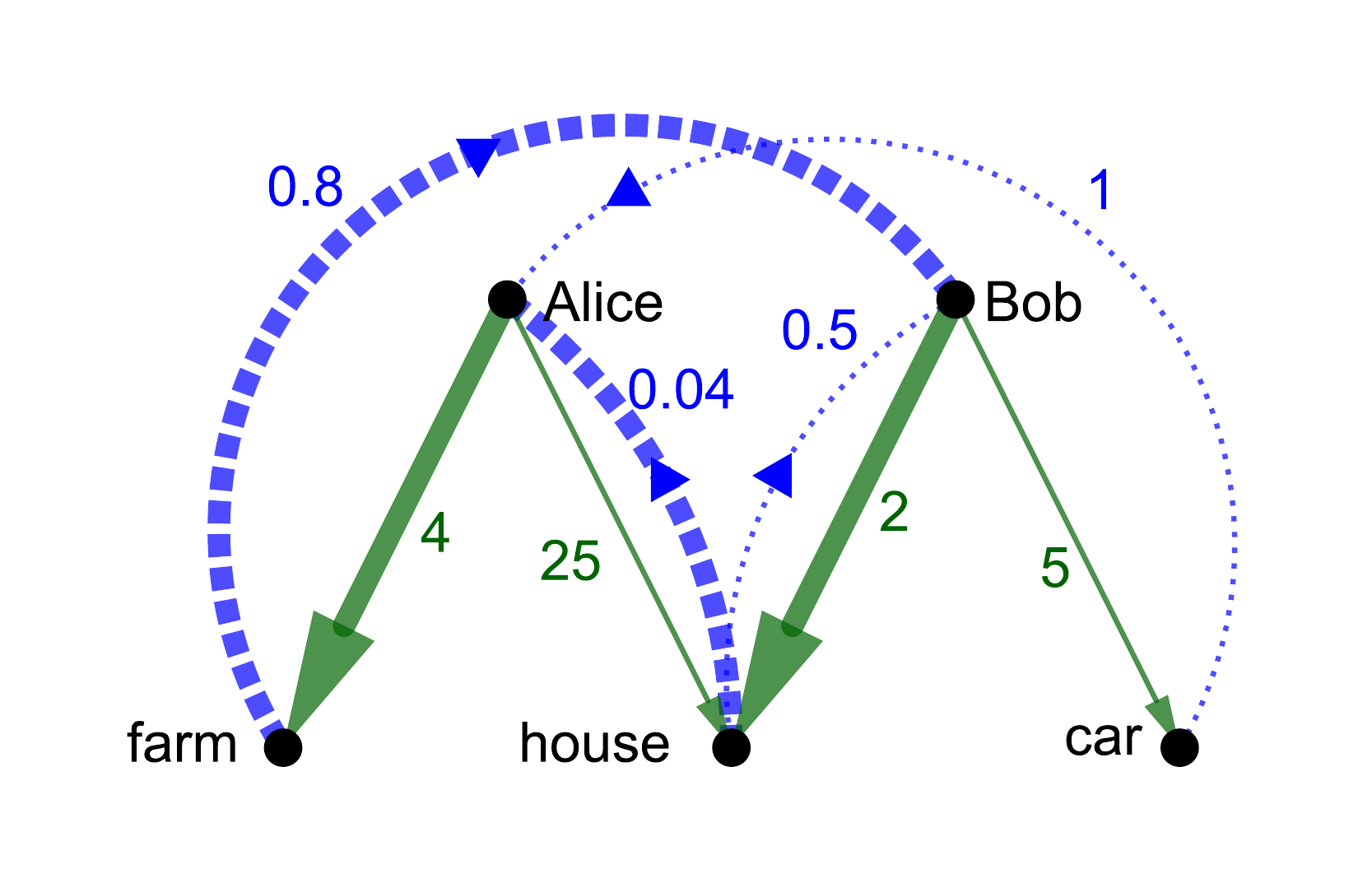}
\end{center}
\caption{\label{fig:consumption-graph} 
Some examples of weighted directed consumption graphs in an instance with pure goods. Alice gets the farm, Bob gets the car, and they share the house.
~~
\textbf{Left:} The farm, house and car are valued by Alice at 4, 2.5 and 1, and by Bob at 1.25, 2 and 5. The allocation is fPO.
~~
\textbf{Right:} The house is valued by Alice at 25 (all other valuations are the same). The allocation is not fPO;  a cycle $C$ with $\pi(C)<1$ is shown by bold arrows.}
\ifdefined\Description
\Description[examples of weighted directed consumption graphs]{Some examples of weighted directed consumption graphs in an instance with pure goods}
\fi
\end{figure}

Given $\mathbf{v}$,  one can reconstruct $\dcg{z}$ from $\ucg{z}$ and vice versa. Indeed, the condition $z_{i,o}<1$ from the definition of $\dcg{z}$ holds if and only if there is an agent $j\ne i$ with $z_{j,o}>0$, i.e., if $o$ is connected to some $j\ne i$ in $\ucg{z}$.

The \emph{product} of a directed path $P$ in \dcg{z}, denoted $\pi(P)$, is the product of weights of edges in $P$. In particular, the product of a cycle $C = (i_1 \to o_1 \to \allowbreak \ldots \to \allowbreak o_L \to i_{L+1} = i_1)$ is:

$$
\pi(C)
= \prod_{k=1}^{L}  \big(w_{i_k\to o_{k}}\cdot w_{o_k\to i_{k+1}}\big).
$$

The importance of this product is justified by the following lemma (proved in Appendix~\ref{sec:po-cycle}):
\begin{lemma}
\label{lem:po-cycle}
An allocation $\mathbf{z}$ is fractionally Pareto-optimal if-and-only-if 
it is non-malicious and
its directed consumption graph $\dcg{z}$ has no cycle $C$ with $\pi(C)<1$.
\end{lemma}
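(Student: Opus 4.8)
The plan is to read the condition $\pi(C)\ge 1$ for every directed cycle as the multiplicative form of ``no negative cycle'': after taking logarithms, $-\log w$ turns each weight into a length, and the absence of a cycle with $\pi(C)<1$ becomes the absence of a negative-length cycle. This is exactly the structure that, by linear-programming duality, is equivalent to the existence of supporting welfare weights, which in turn characterize fractional Pareto optimality for linear utilities. I would therefore split the equivalence into three implications: (i)~$\mathbf z$ fPO $\Rightarrow$ non-malicious (already noted in the text, and provable in one line by moving a mis-assigned good or neutral object to an agent who values it more); (ii)~if $\mathbf z$ is non-malicious but $\dcg{z}$ has a cycle with $\pi(C)<1$, then $\mathbf z$ is not fPO; (iii)~if $\mathbf z$ is non-malicious and $\dcg{z}$ has no such cycle, then $\mathbf z$ is fPO. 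Together (i) and the contrapositive of (ii) give the forward direction, and (iii) gives the converse.

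For (ii) I would exhibit an explicit Pareto-improving perturbation supported on a simple cycle $C=(i_1\to o_1\to\cdots\to o_L\to i_{1})$; a non-simple cycle with product $<1$ always contains a simple one with product $<1$, so this is no loss. The first task is to read off, from the edge types, the correct direction of transfer of each object: non-maliciousness forces every object on the cycle to be a good (transferred $i_k\to i_{k+1}$) or a bad (transferred $i_{k+1}\to i_k$), ruling out the sign-inconsistent combinations. A short computation then shows that, in \emph{both} cases, transferring an amount $t_k>0$ of $o_k$ changes the utility of the downstream agent by $+t_k/w_{o_k\to i_{k+1}}$ and that of the upstream agent by $-w_{i_k\to o_k}\,t_k$. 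I would then pick $t_1,\dots,t_L$ so that the net change of every agent except $i_1$ vanishes, i.e. $t_{k-1}=w_{o_{k-1}\to i_k}\,w_{i_k\to o_k}\,t_k$; telescoping around the cycle leaves $i_1$ with net change $\frac{t_L}{w_{o_L\to i_1}}\bigl(1-\pi(C)\bigr)>0$, while the feasibility constraints $z_{i,o}\in[0,1]$ hold for small $t_L$ precisely because the edges on $C$ certify $z_{i_k,o_k}>0$ or $<1$ wherever needed. This dominating allocation shows $\mathbf z$ is not fPO.

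For (iii) I would build the supporting weights. Non-maliciousness makes every neutral object a sink of $\dcg{z}$ (one checks that all four possible out-edges of a neutral object are excluded), so neutral objects lie on no cycle and can be deleted; on the remaining graph all weights are strictly positive. Setting $c_{u\to v}=\log w_{u\to v}$, the hypothesis ``no cycle with $\pi(C)<1$'' says there is no negative-length cycle, so by the standard shortest-path potential theorem there are node potentials $\phi$ with $\phi_v-\phi_u\le c_{u\to v}$ on every arc. Exponentiating, $a_i:=e^{\phi_i}>0$ (welfare weights on agents) and $q_o:=e^{\phi_o}>0$ (prices on objects) satisfy $p_v\le w_{u\to v}p_u$ for every arc. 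Translating the four edge types into these inequalities yields, for every object, the optimality condition $z_{i,o}>0\Rightarrow a_i v_{i,o}=\max_j a_j v_{j,o}$, with the deleted neutral objects satisfying it trivially since their consumers have value $0$ and all values are $\le 0$. Finally I would verify the one-line bound $\sum_{i}a_i u_i(\mathbf y)\le\sum_o\max_j a_j v_{j,o}=\sum_i a_i u_i(\mathbf z)$ for every feasible $\mathbf y$, so $\mathbf z$ maximizes the positively-weighted welfare and is therefore fPO.

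I expect the main obstacle to be the sign bookkeeping in (ii): verifying that the asymmetric weight definitions for goods and bads conspire so that a good ``flowing forward'' and a bad ``flowing backward'' contribute the \emph{same} uniform expressions $+t_k/w_{o_k\to i_{k+1}}$ and $-w_{i_k\to o_k}t_k$ to the two incident agents, which is what makes the telescoping collapse to the clean factor $1-\pi(C)$. The role of non-maliciousness is delicate and must be tracked throughout: it is exactly what excludes the weight-zero edges and the sign-inconsistent transfers, and it is why (ii) and (iii) may assume it freely while (i) is established separately.
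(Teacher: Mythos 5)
Your proposal follows essentially the same route as the paper's own proof (given jointly for Lemma~\ref{lem:po-cycle} and Lemma~\ref{lem:po-weights} in Appendix~\ref{sec:po-cycle}): the same three-way decomposition, the same cycle-trade perturbation for the ``only if'' direction, and the same construction of positive welfare weights from minimal path products (your logarithmic shortest-path potentials are exactly these, with the paper adding heavy agent-to-agent edges where you would invoke the potential theorem directly), capped by the same weighted-utilitarian argument that the weights certify fPO; the only substantive difference is your choice of transfer amounts, which leaves every agent except $i_1$ exactly indifferent rather than making all agents on the cycle strictly better off. The one point you flag as delicate --- that along a cycle each object is sign-consistent for its two incident agents, so the transfer direction is well defined --- is asserted with the same brevity in the paper's proof, so your sketch matches it in both substance and level of detail.
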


We see that the information about Pareto-optimality of an allocation is ``encoded'' in its consumption graph. An analog of Lemma~\ref{lem:po-cycle} is known for pure goods in a cake-cutting context~\citep{barbanel2005geometry} and was recently extended to problems with bads only by~\citet{branzei2019chores}.
Lemma~\ref{lem:po-cycle}
has a useful computational implication, which is also proved in Appendix \ref{sec:po-cycle}.
\begin{lemma}
\label{lem:po-check}	
It is possible to decide in time $O(nm(n+m))$ whether a given allocation $\mathbf{z}$ is fractionally Pareto-optimal.
\end{lemma}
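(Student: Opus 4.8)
The plan is to turn Lemma~\ref{lem:po-cycle} directly into an algorithm. That lemma reduces fractional Pareto-optimality to two conditions, so I would check them one after the other. First, testing non-maliciousness is a single pass over the valuation and allocation matrices: for every entry with $z_{i,o}>0$ I verify that $o$ is not a bad for $i$ (i.e.\ $v_{i,o}>0$ when $o$ is a good and $v_{i,o}=0$ when $o$ is neutral). Since there are $nm$ entries, this costs $O(nm)$ and will be dominated by the second part.

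The heart of the proof is deciding whether $\dcg{z}$ contains a cycle $C$ with $\pi(C)<1$. The key observation is that the product $\pi(C)=\prod_{e\in C} w_e$ is multiplicative, so taking logarithms turns it additive: $\pi(C)<1$ if and only if $\sum_{e\in C}\log w_e<0$. Hence, after replacing each edge weight $w_e$ by $\log w_e$, the forbidden cycles are exactly the \emph{negative-weight cycles} of the resulting graph, and the task becomes the classical problem of negative-cycle detection.

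It then remains to bound the size of $\dcg{z}$ and invoke a standard subroutine. The vertex set consists of the $n$ agents and $m$ objects, so $|V|=n+m$. Each agent--object pair $(i,o)$ contributes at most two directed arcs --- one arising from the condition $z_{i,o}>0$ and one from $z_{i,o}<1$ --- so $|E|\le 2nm=O(nm)$, and both the graph and its logarithmic weights are built in $O(nm)$ time. To locate a negative cycle I would add an auxiliary source $s$ joined to every vertex by a zero-weight arc (so that every cycle is reachable from $s$) and run Bellman--Ford from $s$: a negative cycle exists if and only if some arc can still be relaxed after $|V|-1$ rounds. Bellman--Ford runs in $O(|V|\cdot|E|)=O\big((n+m)\cdot nm\big)=O(nm(n+m))$, which matches the claimed bound and dominates the $O(nm)$ spent on the non-maliciousness test.

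The only points that need care are the edge-count argument (checking that the two defining conditions yield at most two arcs per pair, so that $|E|=O(nm)$ rather than something larger) and the reachability issue in negative-cycle detection, which the zero-weight super-source resolves cleanly. I do not expect a genuine obstacle here: Lemma~\ref{lem:po-cycle} has already done the conceptual work, and what remains is a faithful but routine translation into a shortest-path computation. If one wishes to avoid logarithms of possibly irrational weights, the same Bellman--Ford relaxation can be carried out multiplicatively --- replacing additions by multiplications and the comparison with $0$ by a comparison with $1$ --- at no change in the asymptotic running time.
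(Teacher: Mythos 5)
Your proposal is correct and follows essentially the same route as the paper: reduce to negative-cycle detection in $\dcg{z}$ via logarithms (or the multiplicative variant of Bellman--Ford to sidestep irrational weights), with $|V|=n+m$ and $|E|=O(nm)$ giving the $O(nm(n+m))$ bound. Your explicit $O(nm)$ non-maliciousness check and the zero-weight super-source for reachability are small additions the paper leaves implicit, but they do not change the argument.
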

Classic results in economic theory
\citep{negishi1960welfare,varian1976two}
represent the Pareto frontier of economies with convex sets of feasible utilities as the set of allocations $\mathbf{z}$ that maximize the weighted utilitarian welfare $\sum_{i\in [n]} \lambda_i u_i(\mathbf{z}_i)$ for some positive weights $\lambda_i$. This leads to another characterization of fPO allocations. 
The proof can be found in \citet{branzei2019chores} for the case of bads and in Appendix \ref{sec:po-cycle} for a mixture of goods and bads.
\begin{lemma}
	\label{lem:po-weights}	
	An allocation $\mathbf{z}$ is fractionally Pareto-optimal if and only if there is a vector of weights  $\lambda=(\lambda_i)_{i\in [n]}$ with $\lambda_i>0$, $i\in[n]$, such that
	\begin{equation}\label{eq_lambda}  
	z_{i,o}>0 \ \ \mbox{implies} \ \  \lambda_i v_{i,o}\geq \lambda_j v_{j,o}
	\end{equation}
	for all agents $i,j\in[n]$ and objects $o\in[m]$.
\end{lemma}
We note that Lemma~\ref{lem:po-weights} implies	an alternative algorithm for checking  fractional Pareto-optimality, which, in contrast to the strongly-polynomial algorithm of Lemma~\ref{lem:po-check} runs in weakly polynomial time. Indeed, fractional Pareto-optimality of $\mathbf{z}$ is equivalent to the existence of the vector $\lambda$ and the latter exists  if and only if the linear program formed  by  inequalities~\eqref{eq_lambda}  and the positivity condition is feasible (the positivity requirement can be replaced by $\lambda_i\geq 1$, $i\in[n]$, since the linear program is homogeneous).

Lemma~\ref{lem:po-weights} provides useful insights into the ``threshold'' structure of fractionally Pareto optimal allocations captured by the following necessary condition for fPO.
\begin{corollary}
	\label{corr:2agents-po}
	 For a fractionally Pareto-optimal allocation $\mathbf{z}$ and any pair of agents $i\ne j$, there is a threshold $t_{i,j}>0$ ($t_{i,j}=\frac{\lambda_j}{\lambda_i}$ from Lemma~\ref{lem:po-weights}) such that for any object $o$
	\begin{itemize}
		\item if $v_{i,o}\cdot v_{j,o}>0$ (i.e., both agents agree whether $o$ is a good or a bad), then
		\begin{itemize}
			\item  for $\frac{|v_{i,o}|}{|v_{j,o}|} > t_{i,j}$,  we have $z_{j,o}=0$ in the case of a good and  $z_{i,o}=0$ in the case of a bad
			\item for $\frac{|v_{i,o}|}{|v_{j,o}|} < t_{i,j}$, we have $z_{i,o}=0$ in the  case of a good and  $z_{j,o}=0$ in the case of a bad
		\end{itemize}
		\item if $v_{i,o}\cdot v_{j,o}<0$, then an agent with negative value cannot consume $o$   
		 \item if $v_{i,o}=0$ but $v_{j,o}\ne0$, then  $z_{j,o}=0$ for negative $v_{j,o}$ and $z_{i,o}=0$ for positive $v_{j,o}$.
	\end{itemize}
In particular, the only objects $o$ that can be shared between $i$ and $j$ are those with $\frac{v_{i,o}}{v_{j,o}} = t_{i,j}$ or with $v_{i,o}=v_{j,o}=0$.
\end{corollary}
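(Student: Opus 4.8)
The plan is to read off the entire statement from Lemma~\ref{lem:po-weights} by a sign-by-sign case analysis, using its weights as the certificate. Fix an fPO allocation $\mathbf{z}$ and let $\lambda=(\lambda_k)_{k\in[n]}$ with $\lambda_k>0$ be the weight vector it supplies, so that for every agent $k$ and object $o$, $z_{k,o}>0$ implies $\lambda_k v_{k,o}\geq \lambda_l v_{l,o}$ for all $l\in[n]$. For the fixed pair $i\neq j$ I would set $t_{i,j}:=\lambda_j/\lambda_i>0$; every claim is then the contrapositive of this implication restricted to the two agents $i,j$. Concretely, if $\lambda_i v_{i,o}>\lambda_j v_{j,o}$ strictly, then $z_{j,o}>0$ would force $\lambda_j v_{j,o}\geq\lambda_i v_{i,o}$, a contradiction, so $z_{j,o}=0$; symmetrically, $\lambda_j v_{j,o}>\lambda_i v_{i,o}$ forces $z_{i,o}=0$.

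The remaining work is just to rewrite the weighted comparison of $\lambda_i v_{i,o}$ and $\lambda_j v_{j,o}$ in terms of the ratio $|v_{i,o}|/|v_{j,o}|$ versus $t_{i,j}$. When $v_{i,o}v_{j,o}>0$ the two values share a sign. If both are positive (a good), the inequality $\lambda_i v_{i,o}>\lambda_j v_{j,o}$ is equivalent to $|v_{i,o}|/|v_{j,o}|>t_{i,j}$, and the contrapositive above yields $z_{j,o}=0$, while the reverse inequality yields $z_{i,o}=0$. If both are negative (a bad), multiplying through by $-1$ reverses the inequality, so $|v_{i,o}|/|v_{j,o}|>t_{i,j}$ now corresponds to $\lambda_i v_{i,o}<\lambda_j v_{j,o}$ and hence forces $z_{i,o}=0$ instead of $z_{j,o}=0$, exactly matching the good/bad switch in the statement. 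This sign flip for bads is the only delicate bookkeeping step, and I expect it to be the main source of potential slip rather than a genuine conceptual obstacle.

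For the opposite-sign case $v_{i,o}v_{j,o}<0$, say $v_{i,o}>0>v_{j,o}$, the agent $j$ who dislikes $o$ has $\lambda_j v_{j,o}<0<\lambda_i v_{i,o}$, so $z_{j,o}>0$ would again violate the lemma; hence the agent with negative value cannot consume $o$. Finally, for the ``in particular'' clause I would observe that sharing of $o$ between $i$ and $j$ means $z_{i,o}>0$ and $z_{j,o}>0$ simultaneously, which by the lemma forces both $\lambda_i v_{i,o}\geq\lambda_j v_{j,o}$ and $\lambda_j v_{j,o}\geq\lambda_i v_{i,o}$, i.e. $\lambda_i v_{i,o}=\lambda_j v_{j,o}$. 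If this common value is nonzero it reads $v_{i,o}/v_{j,o}=t_{i,j}$ (and the signs of $v_{i,o},v_{j,o}$ must then agree, ruling out the opposite-sign and one-zero patterns), and if it is zero then, since $\lambda_i,\lambda_j>0$, both $v_{i,o}$ and $v_{j,o}$ vanish. This exhausts all sign patterns and completes the corollary; the whole argument is elementary once Lemma~\ref{lem:po-weights} is granted, with no nontrivial estimate beyond the sign bookkeeping noted above.
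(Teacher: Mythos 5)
Your proof is correct and is precisely the derivation the paper intends: the corollary is stated as an immediate consequence of Lemma~\ref{lem:po-weights}, with $t_{i,j}=\lambda_j/\lambda_i$ given explicitly in the statement, and the paper omits the routine sign-by-sign case analysis that you carry out. Your bookkeeping for the bad case (the inequality reversal) and the ``in particular'' clause (equality of $\lambda_i v_{i,o}$ and $\lambda_j v_{j,o}$ under sharing, splitting into the nonzero and zero cases) is accurate.
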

A similar result underlies the Adjusted~Winner procedure introduced by~\citet{Brams1996Fair} for goods and extended to mixed problems in~\citep{aziz2018fair}.
The condition is necessary and sufficient for $n=2$ (see~\citep{bogomolnaia2016dividing} for either goods or bads). For $n\geq 3$, the condition is not sufficient; as one can see, it is equivalent to having no cycles of length $4$  with $\pi(C)<1$ in \dcg{z} (for $n=2$ any simple cycle has length at most $4$). However, it does not exclude longer cycles.

\subsection{Measures of Sharing and Worst-case Bounds.}\label{sec:po-fair-bounds}

If for some $i\in [n]$, $z_{i,o} = 1$, then the object $o$ is not shared --- it is fully allocated to agent $i$. 
Otherwise,  object $o$ is shared between two or more agents.
Throughout the paper, we consider two measures  quantifying the amount of sharing in a given allocation $\mathbf{z}.$

The simplest one is \emph{the number of shared objects} $\big|\left\{o\in[m]\, :\, z_{i,o}\in(0,1)\mbox{ for some }i\in[n]\right\}\big|.$
Alternatively, one can take into account the number of times each object is shared. This is captured by \emph{the number of sharings}

\begin{align*}
\shar{z}=\sum_{o\in[m]}\left(\big|\{i\in [n]:\, z_{i,o}>0\}\big|-1\right).
\end{align*}
For discrete allocations, {where each object is allocated entirely to one of the agents,} both measures are zero.  They differ, for example, if only one object $o$ is shared but each agent consumes a bit of $o$: the number of shared objects is then $1$ while the number of sharings is $n-1$.
Clearly, the number of shared objects is always at most the number of sharings.

When there are $n$ agents and $n-1$ identical pure goods, a fair allocation must give each agent a fraction $(n-1)/n$ of a good, for any reasonable definition of fairness. This requires sharing all $n-1$ goods, so $n-1$ is a worst-case lower bound on the number of shared objects and thus also for the number of sharings.
This lower bound can always be attained.

In the case of pure goods or bads but not a mixture, \citet{bogomolnaia2016dividing} showed that for any fractionally Pareto-optimal allocation $\mathbf{z}$, there exists  an equivalent one $\mathbf{z^*}$ (in both allocations, all agents receive the same utilities) with  $\shar{z^*}\leq n-1$. A similar result can be found in \citep{barman2018proximity} (see Claim 2.2) for the so-called competitive equilibrium allocations of pure goods.
The following lemma shows that $\mathbf{z^*}$ can be constructed efficiently. It is proved in Appendix \ref{sec:po-n-1} and covers arbitrary allocations $\mathbf{z}$ rather than just fractionally Pareto-optimal, and a mixture of goods and bads. 
\ifdefined\EC
\else
{In a follow-up work~\citep{aziz2020polynomial} this lemma was applied for computing almost-proportional fPO allocations of such a mixture.}
\fi
\begin{lemma}
\label{lem:po-n-1}
For any allocation $\mathbf{z}$, there is
a fractionally PO allocation $\mathbf{z^*}$ such that:
\begin{itemize}
\item (a) $\mathbf{z^*}$ either Pareto dominates $\mathbf{z}$ or gives every agent the same utility as $\mathbf{z}$.
\item (b) the undirected consumption graph $\ucg{z^*}$ is acyclic.
\item (c) $\mathbf{z^*}$ has at most $n-1$ sharings (hence at most $n-1$ shared objects).
\end{itemize}
The allocation $\mathbf{z^*}$ can be constructed in  strongly-polynomial time using $O(n^2 m^2 (n+m))$ operations. 
\end{lemma}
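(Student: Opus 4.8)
The plan is to first reduce (c) to (b), and then to obtain an allocation meeting (a) and (b) in two stages: pushing $\mathbf{z}$ up to a fractionally Pareto-optimal allocation that weakly dominates it, and then destroying every cycle of the consumption graph without altering any agent's utility. For the reduction, write $d_o=\big|\{i:z^*_{i,o}>0\}\big|$ for the degree of object $o$ in $\ucg{z^*}$; then $\shar{z^*}=\sum_{o}(d_o-1)=|E|-m$, where $|E|$ is the number of edges of $\ucg{z^*}$. If (b) holds, $\ucg{z^*}$ is a forest, so $|E|=(n+m)-c$ with $c$ the number of components, giving $\shar{z^*}=n-c$. Since every object is fully allocated it has degree at least $1$ and hence shares its component with an agent, so $c\le n$ and $\shar{z^*}\le n-1$; as the number of shared objects never exceeds $\shar{z^*}$, (c) follows.

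For the first stage I would drive $\mathbf{z}$ to a fractionally Pareto-optimal allocation $\mathbf{z'}$ with $u_i(\mathbf{z'_i})\ge u_i(\mathbf{z_i})$ for every $i$. After a one-pass preprocessing that makes the allocation non-malicious (move each good to an agent valuing it positively and each neutral object to an agent valuing it at $0$, which only raises utilities), I would use Lemma~\ref{lem:po-cycle}: whenever the current allocation is not fPO, $\dcg{\cdot}$ contains a cycle $C$ with $\pi(C)<1$, detectable in time $O(nm(n+m))$ exactly as in Lemma~\ref{lem:po-check} by searching for a negative cycle in the logarithmic weights. Such a cycle encodes a profitable exchange: rotating mass around it in the favourable direction and stopping at the first coordinate that reaches a boundary yields a weak Pareto improvement. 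Iterating reaches an fPO $\mathbf{z'}$; selecting the cycles by a minimum-mean rule in the log-weights keeps the number of rotations strongly polynomial, in the spirit of minimum-mean-cycle cancelling for min-cost flows.

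In the second stage, while $\ucg{z'}$ still contains a cycle $C=(i_1,o_1,i_2,\dots,i_L,o_L,i_1)$ — necessarily even, as the graph is bipartite — I would kill it by a utility-preserving rotation. Imposing $\Delta_{i_{k+1},o_k}=-\Delta_{i_k,o_k}$ (object $o_k$ stays fully allocated) together with $v_{i_{k+1},o_k}\Delta_{i_{k+1},o_k}+v_{i_{k+1},o_{k+1}}\Delta_{i_{k+1},o_{k+1}}=0$ (agent $i_{k+1}$'s utility is unchanged) leaves a one-parameter family of moves whose solvability around the whole cycle amounts to the single scalar condition $\pi(C)=1$. This is where fPO pays off: because each shared object contributes mutually reciprocal weights to the two orientations of $C$, one has $\pi(C)\cdot\pi(C^{-1})=1$ for the reverse traversal $C^{-1}$, and Lemma~\ref{lem:po-cycle} rules out values below $1$ in either direction, forcing $\pi(C)=1$. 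Pushing the rotation until some $z_{i,o}$ hits a boundary fixes every agent's utility — hence preserves weak domination and, having an unchanged utility profile, fPO — while deleting at least one edge of $\ucg{\cdot}$ and creating none; after at most $|E|\le nm$ such steps $\ucg{z^*}$ is acyclic.

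The complexity bookkeeping then combines an $O(nm)$-type count of rotations with the $O(nm(n+m))$ cost of locating a cycle, yielding the claimed $O(n^2m^2(n+m))$, all operations being on the input numbers so the procedure is strongly polynomial. I expect the genuine obstacle to be the sign accounting of the mixed goods-and-bads model, precisely the point flagged as non-trivial relative to the single-type results of \citet{bogomolnaia2016dividing}: one must check that $\pi(C)\cdot\pi(C^{-1})=1$ and that the consistency scalar of the rotation is $+1$, not $-1$, across the four edge-orientation cases defining $\dcg{\cdot}$. Here Corollary~\ref{corr:2agents-po} is decisive — two agents may share an object only when they value it with the same sign, so the product of signs encountered around $C$ is a perfect square and the consistency condition indeed collapses to $+1$. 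Confirming, in addition, that non-maliciousness and weak domination survive every rotation, and that the minimum-mean selection in the first stage really gives a strongly polynomial step count, is the remaining delicate part.
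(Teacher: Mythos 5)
Your overall architecture is the paper's: make the allocation non-malicious, cancel cycles of the directed consumption graph by rotating mass until a coordinate hits a boundary, observe that at fPO every surviving undirected cycle has $\pi(C)=\pi(C^{-1})=1$ because the two orientations of each edge carry reciprocal weights, and finish with the forest count $\shar{z^*}=|E|-m\le n-1$. The reduction of (c) to (b) and the stage-two rotation are essentially identical to the paper's argument. The genuine gap is in stage one: you assert that a minimum-mean-cycle-cancelling rule makes the number of $\pi(C)<1$ cancellations strongly polynomial, "in the spirit of min-cost flows", and you yourself flag this as unproven. It is indeed the missing step: a rotation that makes \emph{every} agent on the cycle strictly better off (as in the proof of Lemma~\ref{lem:po-cycle}) gives no potential function bounding the number of iterations, and importing the Goldberg--Tarjan analysis into this multiplicative, non-flow setting is neither immediate nor would it yield the claimed $O(n^2m^2(n+m))$ bound.

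The paper closes this differently and more simply: it runs a \emph{single} loop that eliminates every simple directed cycle with $\pi(C)\le 1$ (not only $<1$), choosing the transfer amounts so that agents $i_2,\dots,i_L$ are \emph{exactly indifferent} and only $i_1$ weakly gains, and pushing until some $z_{i_k,o_k}$ on the cycle reaches $0$. Each iteration thus deletes an edge of $\ucg{z}$, the iteration count is charged against the at most $nm$ edges, and the two stages you separate (reaching fPO, then killing $\pi(C)=1$ cycles) collapse into one procedure with the $O(nm(n+m))$ Bellman--Ford-type search dominating each iteration. If you adopt that indifference-preserving rotation, your minimum-mean selection becomes unnecessary and the complexity bookkeeping goes through. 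One further small repair: your sign accounting via Corollary~\ref{corr:2agents-po} does not cover objects valued at zero by the agents consuming them (non-maliciousness allows two agents with $v_{i,o}=v_{j,o}=0$ to share $o$); such edges have weight $0$ or no outgoing orientation in $\dcg{z}$, so they never lie on a directed cycle with $\pi(C)<1$, but they can lie on an undirected cycle, which must then be broken by an explicit utility-free transfer (the paper's final depth-first-search pass) rather than by the reciprocal-weight argument.
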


This lemma, combined with known algorithms for computing fair and fractionally PO allocations, yields the following corollary.
\begin{corollary}
\label{corollary:po-ef-n-1}
In any instance with $n$ agents, there exists a fractionally Pareto-optimal, envy-free (and thus proportional) division with at most $n-1$ sharings.
In some special cases, such an allocation can be found in strongly-polynomial time:
\begin{itemize}
\item 
If the fairness notion is proportionality only --- using $O(n^2 m^2 (n+m))$ operations;
\item If all objects are pure goods --- {using $O\big((n+m)^4\log(n+m) + n^2 m^2 (n+m)\big)$ operations;} 
\item If objects are mixed --- using 
$O\big(m^{n+2}\big)$ operations for fixed $n$ and $O\big(n^{m+2}\big)$ for fixed $m$.
\end{itemize}
\end{corollary}

\begin{proof}
The existence of envy-free fPO allocations for mixed problems was proved by~\citet{Bogomolnaia2017Competitive}. Such allocations can be obtained as
competitive equilibria of the associated Fisher market with equal incomes (CEEI); see the discussion in Section~\ref{sec:intro} ans~\ref{sec:related}. CEEI satisfies the property of Pareto-indifference: if $\mathbf{z}$ is a CEEI and $\mathbf{z}^*$ gives the same utilities to all agents, then $\mathbf{z}^*$ is also a CEEI. This allows us to apply Lemma~\ref{lem:po-n-1} and get an envy-free fPO allocation with $\shar{z}\leq n-1$.

For the first claim, consider the equal-split allocation $\mathbf{z}$ ($z_{i,o}=\frac{1}{n}$ for all $i,o$) and construct a fractionally Pareto-optimal dominating allocation $\mathbf{z^*}$ by Lemma~\ref{lem:po-n-1}. Pareto-improvements preserve proportionality, and thus $\mathbf{z^*}$ is proportional, fPO, and has at most $n-1$ {sharings}.

Algorithms for computing Fisher market equilibria are known 
for pure goods with run time   $O((n+m)^4\log(n+m))$ \citep{orlin2010improved} and for a mixture
with 
run time $O\big(m^{n+2}\big)$ for  fixed $n$ and
$O\big(n^{m+2}\big)$  for fixed $m$ \citep{garg2020computing}.
Application of Lemma~\ref{lem:po-n-1} to the computed CEEI yields the remaining claims.
\end{proof}


\section{Pareto-Optimal Fair Division: Minimizing the Sharing.}
\label{sec:po-fair-algorithm}
As we saw in Subsection~\ref{sec:po-fair-bounds},  having $n-1$ sharings is unavoidable in the worst case. However, the average-case behavior is much better than the worst-case and it is likely to find a fair and fractionally Pareto optimal allocation with fewer sharings (see~\citep{Dickerson2014Computational} and the discussion in Section~\ref{sec:intro}).
This raises the following computational problem:

\begin{quote}
\centering
\emph{For a given instance of a fair division problem,\\find a solution that minimizes the number of sharings.
}
\end{quote}

We will contrast between two extreme cases: agents with \emph{identical valuations} and agents with \emph{non-degenerate valuations}.
\begin{definition}
\label{def:2agents-deg}
A valuation matrix $\mathbf{v}$ 
is called \emph{degenerate}
if there exist two agents $i,j$ and two objects $o,p$ such that $v_{i,o}\cdot v_{j,p} = v_{i,p}\cdot v_{j,o}$ (or $\frac{v_{i,o}}{v_{j,o}}=\frac{v_{i,p}}{v_{j,p}}$ if denominators are non-zero). %
Otherwise, it is called \emph{non-degenerate}.
\end{definition}
Note that if $\mathbf{v}$ is selected randomly according to a continuous probability distribution, it is non-degenerate with probability $1$.%

\citet{bogomolnaia2016dividing} and \citet{branzei2019chores} use a stronger definition of degeneracy: the complete agent-object graph has no cycles $C$ with $\pi(C)=1$. Their condition implies that $\ucg{z}$ is acyclic for any fPO allocation $\mathbf{z}$ and that there is a bijection between Pareto-optimal utility profiles and fPO allocations. 
Our definition addresses only cycles of length $4$ and thus can be  easily  checked in $O(n^2\cdot m\log m)$ operations (see Subsection~\ref{sub:nagents}).
For $2$ agents the definitions coincide.

\subsection{Warm-up: Two Agents, Pure Goods.}
\label{sub:2agents}
For $n=2,$ the upper bound on the number of sharings from Subsection \ref{sec:po-fair-bounds} is $1$, so sharing-minimization boils down to finding a fair allocation with no sharings at all (if such an allocation exists). The following ``negative'' result is well-known (e.g., \citep{Lipton2004Approximately}); we present it to contrast with the ``positive'' theorem after it.
\begin{theorem}
\label{thm:2agents-identical-hard}
When there are $n=2$ agents with identical valuations over $m$ pure goods, it is NP-hard to decide whether there exists an allocation with no sharings that is EF ($=$PROP for $n=2$) and fractionally-PO.
\end{theorem}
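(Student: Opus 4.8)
The plan is to reduce from the \textsc{Partition} problem: given positive integers $a_1,\dots,a_m$ with $\sum_o a_o = 2S$, decide whether some subset sums to $S$. I would build a fair-division instance with $n=2$ agents and $m$ pure goods by setting $v_{1,o}=v_{2,o}=a_o$ for every object $o$. This construction is clearly polynomial in the input size, so it remains to argue that the instance admits the desired allocation if and only if the \textsc{Partition} instance is a yes-instance.

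The first key observation is that, with identical valuations, a no-sharing \emph{proportional} allocation is exactly a balanced partition of the goods. Writing $V=\sum_o v_o$, note that for \emph{any} allocation $\mathbf{z}$ we have $u_1(\mathbf{z}_1)+u_2(\mathbf{z}_2)=\sum_o v_o(z_{1,o}+z_{2,o})=V$, since every object is fully allocated and the two values coincide. Each agent's proportional share is $V/2$; as the two shares sum to $V$, proportionality forces each agent to receive value \emph{exactly} $V/2$. Hence a no-sharing proportional allocation exists iff the multiset $\{v_o\}=\{a_o\}$ can be split into two parts of equal sum, i.e. iff the \textsc{Partition} instance is a yes-instance. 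Recall also that for $n=2$ proportionality coincides with envy-freeness, so this already covers the EF variant stated in the theorem.

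The second observation is that the fractional-Pareto-optimality requirement is vacuous under identical valuations, so the ``proportional and fPO'' variant reduces to the same \textsc{Partition} instance. Applying Lemma~\ref{lem:po-weights} with the uniform weights $\lambda_1=\lambda_2=1$, the condition $z_{i,o}>0 \Rightarrow \lambda_i v_{i,o}\geq \lambda_j v_{j,o}$ becomes $v_o\geq v_o$, which holds trivially; equivalently, since $u_1(\mathbf{z}_1)+u_2(\mathbf{z}_2)=V$ is constant, the entire set of feasible utility profiles lies on the Pareto frontier. Thus \emph{every} allocation is fractionally Pareto-optimal, and the constraints ``proportional'' and ``proportional and fPO'' coincide; the single reduction proves hardness for both formulations.

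I do not expect a substantive obstacle, since the reduction is direct and both observations are short. The only delicate point worth flagging is that \textsc{Partition} is NP-complete merely in the weak sense, so the hardness relies on the values $a_o$ being encoded in binary; this is enough to establish the claimed NP-hardness, but it means one should not read the result as strong NP-hardness or as ruling out a pseudopolynomial algorithm.
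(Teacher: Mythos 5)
Your proposal is correct and follows essentially the same route as the paper: reduce from \textsc{Partition} via identical valuations, note that every allocation is fractionally Pareto-optimal and that envy-freeness equals proportionality for two agents, so a fair no-sharing allocation exists iff an equal-sum partition exists. Your write-up just fills in the details (constant utility sum forcing exactly $V/2$ each, and the weak-NP-hardness caveat) that the paper leaves implicit.
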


\begin{proof}
For two agents with identical valuations, all allocations are fractionally-PO. 
Thus, a fair+fPO allocation exists if-and-only-if the set of goods can be partitioned into two subsets with the same sum of values. Hence, the problem is equivalent to the NP-complete problem \textsc{Partition}.
\end{proof}


The following theorem shows that,
under the requirement of fractional Pareto-optimality, 
the computational problem becomes easier when the valuations are different.
\begin{theorem}
\label{thm:2agents-different-easy}
For two agents with non-degenerate 
valuations over $m$ pure goods, it is possible to find in time $O(m\cdot \log(m))$ a division that is EF ($=$PROP for $n=2$) and fractionally-PO,
and subject to these requirements, minimizes the number of sharings.

If the goods are pre-ordered by the ratio $\frac{v_{1,o}}{v_{2,o}}$, the computation takes linear time~$O(m)$.
\end{theorem}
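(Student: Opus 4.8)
The plan is to exploit the threshold characterization of fPO for two agents (Corollary~\ref{corr:2agents-po}) together with non-degeneracy to collapse the entire Pareto frontier to a single monotone one-parameter curve, and then to minimize sharing by an ``integer-in-interval'' test. First I would sort the goods in decreasing order of the ratio $r_o = v_{1,o}/v_{2,o}$; since the goods are pure each $r_o$ is a well-defined positive number, and non-degeneracy guarantees the $r_o$ are pairwise distinct. By Corollary~\ref{corr:2agents-po}, which for $n=2$ is both necessary and sufficient for fractional Pareto-optimality, every fPO allocation arises from a threshold $t$ that gives every good with $r_o>t$ entirely to agent~$1$ and every good with $r_o<t$ entirely to agent~$2$; only a good with $r_o=t$ may be shared, and distinctness of the ratios means this is at most one good. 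Hence the fPO allocations are parameterized by a single real ``cut position'' $\tau\in[0,m]$: goods ranked $1,\dots,\lfloor\tau\rfloor$ go entirely to agent~$1$, goods ranked $\lfloor\tau\rfloor+2,\dots,m$ go entirely to agent~$2$, and the good at rank $\lfloor\tau\rfloor+1$ is given to agent~$1$ in fraction $\tau-\lfloor\tau\rfloor$ and to agent~$2$ in the remainder. This allocation shares no good precisely when $\tau$ is an integer, and shares exactly one good otherwise.

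Along this curve agent~$1$'s utility $\phi(\tau)$ is continuous and strictly increasing (every $v_{1,o}>0$) from $0$ to $V_1:=\sum_o v_{1,o}$, while agent~$2$'s utility $\psi(\tau)$ is continuous and strictly decreasing from $V_2:=\sum_o v_{2,o}$ to $0$. Proportionality (equivalently, envy-freeness for two agents) requires $\phi(\tau)\ge \tfrac12 V_1$ and $\psi(\tau)\ge \tfrac12 V_2$. By monotonicity the first condition holds exactly on an upper interval $[\tau_1,m]$ and the second exactly on a lower interval $[0,\tau_2]$, so the proportional fPO allocations form the interval $[\tau_1,\tau_2]$. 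This interval is non-empty: Corollary~\ref{corollary:po-ef-n-1} guarantees a proportional fPO allocation, which corresponds to some $\tau^{*}$ satisfying both constraints, whence $\tau_1\le\tau^{*}\le\tau_2$.

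Minimizing the number of sharings (which for two agents equals the number of shared objects, hence is $0$ or $1$) now reduces to a single test: a no-sharing solution exists if and only if $[\tau_1,\tau_2]$ contains an integer, i.e. $\lceil\tau_1\rceil\le\tau_2$. In that case I output the integer cut $k=\lceil\tau_1\rceil$, sharing nothing. Otherwise the whole interval lies strictly inside some open unit interval $(k,k+1)$, so no integer cut is fair, and any $\tau\in[\tau_1,\tau_2]$ yields a proportional fPO allocation sharing only the good of rank $k+1$; this is optimal and matches the worst-case bound $n-1=1$ of Section~\ref{sec:po-fair-bounds}. For the running time, sorting costs $O(m\log m)$; the prefix sums of the two valuation rows along the sorted order are computed in $O(m)$, and, since these prefix sums are monotone, $\tau_1$, $\tau_2$, and the integer test are obtained by one linear scan (or binary search) in $O(m)$. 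This gives the stated $O(m\log m)$ bound, and $O(m)$ once the goods are pre-sorted by $r_o$.

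The main obstacle — really the only conceptual step — is justifying that the proportional-and-fPO allocations form a single connected interval on a one-dimensional curve, so that sharing-minimization collapses to the integer test. This rests on three facts specific to two agents and pure goods: the threshold condition is \emph{sufficient} (not merely necessary) for fPO when $n=2$; non-degeneracy makes the shared good unique, so the frontier is genuinely one-dimensional rather than a higher-dimensional face; and strict monotonicity of the two utilities (from positivity of all values) turns each fairness constraint into a half-line whose intersection is $[\tau_1,\tau_2]$. The remaining work is routine bookkeeping with prefix sums.
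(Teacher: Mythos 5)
Your proposal is correct and follows essentially the same route as the paper's proof: both rest on the threshold characterization of Corollary~\ref{corr:2agents-po}, which under non-degeneracy collapses the fPO frontier to the one-parameter family of prefix allocations, after which fairness is checked along that family in linear time. Your monotone-interval/integer-containment test is a slightly slicker way to organize the final search than the paper's direct enumeration of the $m+1$ integer cuts followed by the one-variable feasibility systems, but it is the same argument in substance.
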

\begin{proof}
Order the goods in descending order of the ratio ${v_{1,o} / v_{2,o}}$, for $o\range{1}{m}$ (this takes $O(m\log(m))$ operations). By the assumption of non-degeneracy, no two ratios coincide.

By Corollary~\ref{corr:2agents-po}, any fractionally PO allocation $\mathbf{z}$ takes one of two forms: 
\begin{itemize}
\item ``$0$ sharings'': there is a good $o$ such that $\mathbf{z}$ gives all the prefix goods $1,\dots,o$ to agent $1$, and all suffix goods $o+1,\dots,m$ to agent $2$.
\item ``$1$ sharing'': there is a good $o$ which is split between the two agents, while all goods $1,\dots,o-1$ are consumed by agent $1$ and all remaining goods $o+1,\dots,m$ by agent $2$.
\end{itemize}
Therefore, we have $m+1$ allocation with $0$ sharings and each of them can be tested for fairness. 
~~~
If there are no fair allocations among them, then we look for a fair allocation among those with one sharing. For any fixed $o$, this leads to solving a system of two linear inequalities  with just one variable (the amount of $o$ consumed by agent $1$).
\end{proof}

\begin{remark}\label{rem_without_fPO_hard}
If  the requirement of fractional-PO is removed,  we cannot escape the hardness of Theorem~\ref{thm:2agents-identical-hard} even for non-degenerate valuations. This can be demonstrated by adding a small twist in the proof of Theorem~\ref{thm:2agents-identical-hard}: instead of an instance with identical valuations, a tiny perturbation of it is considered; this eliminates degeneracy but preserves reduction from \textsc{Partition}. 
\ifdefined\EC
\else
Hardness of sharing-minimization without fPO is further explored by \citet{segal2019fair}.
\fi
\end{remark}

\subsection{Main Results: n Agents, Mixed Valuations, Varying Degeneracy.}
\label{sub:nagents}

Now we come back to the full generality of mixed problems with an arbitrary number of agents.

In order to capture instances ``in between'' the two extremes of non-degenerate  and identical valuations, we  define \emph{the degree of degeneracy}  of a valuation matrix $\mathbf{v}$ as 

$$\D{v}=\max_{i, j\in [n], i\ne j}\max_{r>0} \big|\big\{o\in [m]\, : \ v_{i,o}=r\cdot v_{j,o} \big\}\big|-1.$$

Informally, $\D{v}+1$ is the maximal number of objects $o$ such that some agents $i\ne j$ have the same ratio $\frac{v_{i,o}}{v_{j,o}}$ for all of them. Degree of degeneracy can be easily computed in time $O(n^2\cdot m\log(m))$: for each pair of agents rearrange the ratios in a weakly-decreasing order and then find the longest interval of constancy. 

A valuation matrix $\mathbf{v}$ is non-degenerate if and only if $\D{v}=0$. In particular, $\D{v}$ equals zero with probability one for any continuous
probability measure on
$\mathbb{R}^{n\times m}$.
In the case of identical valuations, $\D{v}$ attains its maximal value, which is $m-1$.

\begin{remark}
\label{rem:sharing-upper-bound}
By Corollary \ref{corr:2agents-po}, 
in any fPO allocation, 
for each pair of agents, 
at most $\D{v}+1$ objects are shared.
Hence,  in \emph{any} fPO allocation, 
the number of sharings is at most 
$(\D{v}+1)\frac{n(n-1)}{2}$.
Contrast this with Lemma \ref{lem:po-n-1}: it says that, for any fPO utility profile, \emph{there exists} an fPO allocation with these utilities, in which the number of sharings is at most $n-1$.
\end{remark}

The next theorem is our main result: it shows how increasing \D{v} moves us gradually from 
the easiness illustrated (for two agents) by Theorem~\ref{thm:2agents-different-easy}  to the hardness of  Theorem~\ref{thm:2agents-identical-hard}.
\begin{theorem}
	\label{thm:nagents-different-easy}
	 Fix the number of agents $n\geq 2$. 

(a) Given an $n$-agent instance $\mathbf{v}$ with a mixture of goods and bads,
		an allocation $\mathbf{z}$  that minimizes the number of sharings
		 \shar{z}  subject to 
		fractional-Pareto-optimality and proportionality (or envy-freeness) can be computed using 
				$$O\left(3^{\frac{n(n-1)}{2}\cdot \D{v}} \cdot m^{\frac{n(n-1)}{2}+2}\right)$$
		operations.
		In particular, for  any fixed constant $C>0$, sharing-minimization can be performed in strongly-polynomial time for any instance $\mathbf{v}$ with $\D{v}\leq C\cdot \log(m)$.
		
(b) Fix arbitrary constants $C>0$ and $\alpha>0$. Checking the existence of a fractional-Pareto-optimal  proportional (or envy-free) allocation $\mathbf{z}$ with $\shar{z}=0$ is NP-hard for valuations $\mathbf{v}$ such that $\D{v}\geq C\cdot m^\alpha$.
\end{theorem}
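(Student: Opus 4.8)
The plan is to exploit the "threshold structure" of fPO allocations from Corollary~\ref{corr:2agents-po} to reduce the search space to polynomial size, and then search over it exhaustively. For each ordered pair of agents $(i,j)$, Corollary~\ref{corr:2agents-po} says that every fPO allocation is governed by a threshold $t_{i,j}$ comparing the ratios $|v_{i,o}|/|v_{j,o}|$: objects strictly above the threshold go one way, those strictly below go the other way, and only objects exactly at the threshold ratio can be shared. The degree of degeneracy $\D{v}$ bounds, for each pair, the number of objects that can sit exactly at any fixed ratio, hence the number of shareable objects per pair is at most $\D{v}+1$. The idea is therefore to guess, for each of the $\frac{n(n-1)}{2}$ unordered pairs, the position of the threshold among the sorted ratios. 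This gives roughly $m^{\frac{n(n-1)}{2}}$ combinatorial "skeletons" describing which side of each threshold every object falls on.

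First I would, for each pair $(i,j)$, sort the $m$ objects by the ratio $v_{i,o}/v_{j,o}$; by the definition of $\D{v}$, each distinct ratio value is shared by at most $\D{v}+1$ objects. For a fixed choice of threshold positions across all pairs, the combined constraints from the pairwise thresholds determine, for almost every object, a unique agent who must fully consume it — these are the objects whose ratio is strictly above or below the relevant threshold for the decisive pair. The remaining objects are exactly those lying at a threshold ratio for some pair; by the degeneracy bound there are at most $(\D{v}+1)\cdot\frac{n(n-1)}{2}$ of them, and for each I would branch over the (at most $3$, heuristically) agents among which it may be split, contributing the factor $3^{\frac{n(n-1)}{2}\D{v}}$. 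Having fixed both the skeleton and the sharing pattern, the amounts of the shared objects are the only free variables; fairness (proportionality or envy-freeness) and the welfare-weight consistency of Lemma~\ref{lem:po-weights} become a linear feasibility system in these $O(\D{v})$ variables, solvable in time polynomial in $m$ for fixed $n$. Among all feasible configurations I would return the one minimizing $\shar{z}$.

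The main obstacle will be arguing that this enumeration is both \emph{complete} and \emph{sound}: every fPO-and-fair allocation must arise from some threshold choice (completeness follows from Corollary~\ref{corr:2agents-po}, but one must handle ties and the global consistency of the pairwise thresholds via the single weight vector $\lambda$ of Lemma~\ref{lem:po-weights}, not merely pairwise), and conversely each guessed skeleton genuinely corresponds to an fPO allocation (soundness requires ruling out the longer cycles that Corollary~\ref{corr:2agents-po} alone does not control, falling back on Lemma~\ref{lem:po-cycle}). Counting the operations then gives the stated bound $O\!\left(3^{\frac{n(n-1)}{2}\D{v}}\, m^{\frac{n(n-1)}{2}+2}\right)$, and substituting $\D{v}\le C\log m$ turns the exponential factor $3^{\frac{n(n-1)}{2}\D{v}}$ into a polynomial $m^{O(1)}$, yielding strong polynomiality.

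**The plan for part (b):**

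For the hardness direction I would reduce from \textsc{Partition}, reusing the construction already deployed in Theorem~\ref{thm:2agents-different-hard} but now \emph{inflating} the instance so that its degree of degeneracy reaches the prescribed $\Omega(m^\alpha)$. The point is that Theorem~\ref{thm:2agents-different-hard} proves hardness of deciding existence of a zero-sharing proportional (EF) allocation, and adding fractional-Pareto-optimality is free in the regime we build because the valuations will be arranged to make the fair allocation automatically fPO. Concretely, I would take a \textsc{Partition} instance and pad it with many objects that the two relevant agents value \emph{identically} (or at a common ratio), so that a block of size $\Theta(m^\alpha)$ all share one ratio value, forcing $\D{v}\ge C\cdot m^\alpha$ by the definition of degree of degeneracy. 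The hard part here is calibrating the padding so that (i) the degeneracy threshold $\D{v}\ge C m^\alpha$ is met for the desired constant $C$ and exponent $\alpha$, (ii) the padding objects do not create spurious zero-sharing fair allocations that would decouple feasibility from the underlying \textsc{Partition} answer, and (iii) the reduction remains polynomial-time; for $n\ge 3$ one simply lets the extra agents be satisfied trivially (e.g.\ by dedicated objects) so the problem restricts to the two-agent core. With these in place, a zero-sharing fPO proportional allocation exists if and only if the \textsc{Partition} instance is solvable, completing the reduction.
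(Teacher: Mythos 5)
Your part~(a) is essentially the paper's approach with one organizational difference and one real omission. The paper also reduces the search to the family $\G{v}$ of consumption graphs of fPO allocations, bounds it via Corollary~\ref{corr:2agents-po} and the fact that at most $\D{v}+1$ objects per pair can sit at the threshold ratio, and then solves a linear feasibility problem for fairness on each graph. But it does not enumerate threshold tuples directly over all $\frac{n(n-1)}{2}$ pairs; it builds $\G{v}$ incrementally, agent by agent (Lemmas~\ref{lem:enumerate-level2} and~\ref{lem:enumerate-levelk}), refining each $k$-agent fPO graph by a two-agent threshold between the newcomer and each incumbent, and then certifying every candidate by the cycle test of Lemma~\ref{lem:po-check} — which is also how you should resolve your "global consistency of pairwise thresholds" worry (do not put the weights $\lambda$ of Lemma~\ref{lem:po-weights} into the LP: with both $\lambda$ and $\mathbf{z}$ unknown those constraints are bilinear; fPO is certified combinatorially on the graph, and the LP encodes only fairness and feasibility). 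The omission: the paper first discards any fPO graph with more than $n-1$ sharings, justified by Lemma~\ref{lem:po-n-1} (the minimum of \shar{z} over fair fPO allocations is attained on such a graph). This caps the LP at $O(n)$ variables and constraints, independent of $m$, so it is solved in constant time; your LP with $O(\D{v}\cdot n^2)$ variables is not obviously solvable in \emph{strongly} polynomial time and does not yield the stated operation count.

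Part~(b) contains a genuine gap: you have inverted the roles of the degenerate block and the padding. You propose to reuse the perturbed valuations of Theorem~\ref{thm:2agents-different-hard} (Alice values good $o$ at $a_o$, Bob at $a_o+b_o$, all ratios distinct) and to pad with identically-valued objects to push $\D{v}$ up to $C m^\alpha$. But once fractional Pareto-optimality is imposed, non-degenerate Partition goods kill the reduction: by Corollary~\ref{corr:2agents-po} the pair has a single threshold $t_{1,2}$, and since the perturbed ratios are pairwise distinct, every fPO allocation splits the Partition goods as a prefix/suffix of the ratio order — only $O(m)$ candidates, so the instance becomes \emph{easy} (this is exactly Theorem~\ref{thm:2agents-different-easy}). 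The paper's construction does the opposite: the Partition numbers themselves are valued \emph{identically} by both agents, forming the degenerate block of size $p$ at ratio exactly $1$, so that \emph{every} bipartition of them is consistent with the threshold $t_{1,2}=1$ and hence with fPO; the padding consists of $m-p$ "small" goods with pairwise distinct ratios straddling $1$ and total value below $1/2$, which inflate $m$ without inflating $\D{v}$ and whose allocation is forced by Corollary~\ref{corr:2agents-po} once both agents consume big goods. Then a zero-sharing fair fPO allocation exists iff the Partition instance is solvable, with $\D{v}=p-1\le C m^\alpha$ and $m=m(p)$ polynomial in $p$.
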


\begin{proof}
\textbf{(a)} The algorithm has two phases. The first phase is to enumerate the set $\G{v}$ of all
 \emph{{fPO graphs}} ---  undirected consumption graphs  of fractionally PO allocations. This phase is the subject of Proposition~\ref{prop:nagents-po} below. 

The second phase is testing each  $G\in \G{v}$:
\begin{enumerate}
\item 
\label{item:upperbound}
Count the number of sharings in $G$. If it exceeds $n-1$, skip $G$ %
.
\item 
\label{item:variables}
For each shared object $o$, and for each agent $i$ connected to $o$, create a non-negative variable $z_{i,o}$ representing the fraction of $o$ allocated to $i$. The total number of such variables is at most $2(n-1)$ 
--- for each shared object, we have one variable for each agent connected to it. 
\item 
\label{item:lp}
Represent the required fairness condition (EF / proportionality) as a set of linear inequalities in these variables. Solve the resulting LP. 
\item 
Among those graphs $G$ where the {LP} has a solution, select the one with the smallest number of sharings and return the corresponding allocation. 
\end{enumerate}
Step \ref{item:upperbound} is justified by 
Lemma \ref{lem:po-n-1}: it ensures that we can restrict our attention to fPO allocations with at most $n-1$ sharings. Since all graphs of such allocations are checked, 
a fair fPO allocation with the minimal \shar{z} will be found.

Note that a solution to the linear program from Step~\ref{item:lp}, if exists, may not be unique. We stress that the algorithm can pick any solution to this LP, not necessary  the one that minimizes the number of sharings over the set of  solutions. Indeed, if for a graph  $G\in \G{v}$, the LP has a solution such that some variables $z_{i,o}$ are equal to zero, such a solution will be discovered when the algorithm faces the graph $G'\in \G{v}$, where the corresponding edges $(i,o)$ are eliminated.

For fixed $n$, the number of operations per fPO graph $G$ is $O(m)$, the time needed to ``read'' it.  Solving the LP takes constant  time since its size does not depend on $m$ --- it depends on $n$ only and $n$ is fixed: we have at most $2(n-1)$ variables, at most $3(n-1)$ feasibility constraints (at most $2(n-1)$ of non-negativity and at most $n-1$ of full allocation), $n$ fairness constraints for proportionality and $n(n-1)$ for EF. Thus, the run time of the second phase is $O(m\cdot |\G{v}|)$ and the first phase determines the overall complexity.\smallskip \\
\noindent \textbf{(b)} We  outline a reduction from  \textsc{Partition}. It is slightly more complicated than the reduction by \cite{Lipton2004Approximately} used in Theorem~\ref{thm:2agents-identical-hard} since we need to construct an instance that is not too degenerate and, in particular, we cannot rely on instances with identical preferences.

 We present the construction for $n=2$; the case $n>2$ can be covered by adding dummy agents.
Given an instance $a_1,a_2,\dots,a_p$ of \textsc{Partition}, pick a minimal $m=m(p)$ such that $C\cdot m^\alpha\geq p-1 $ %
. 
Define a fair division instance with $m$ pure goods of two types:
\begin{itemize}
\item $p$ ``big'' goods: for each $o\in[p]$, the good $o$ is equally valued by both agents: $v_{1,o}=v_{2,o}=a_o$.
\item $m-p$ ``small'' goods: 
there are $Q=\frac{m-p}{2}$ pairs (w.l.o.g., $m-p$ is even)  of goods $(q_k,\bar{q}_k)_{k\in[Q]}$ such that $v_{1,q_k}=v_{2,\bar{q}_k}=\frac{k+1}{4mk}$, $v_{1,\bar{q}_k}=v_{2,{q_k}}=\frac{1}{4mk}$.
\end{itemize}
Note that the value-ratios of the $p$ big goods are all equal to $1$, while the value-ratios of the small goods are all different ($k+1$ for $k=1,2,\ldots$). Hence the degeneracy degree of the instance is $p-1$.

The sum of each pair of small goods is less than $1/m$, so the sum of all small goods is less than $1/2$ for both agents, 
while the value of each big good is a positive integer. Therefore,
in any fair fractionally PO allocation, both agents consume some of the big goods. Thus, by Corollary~\ref{corr:2agents-po},  agent $1$ consumes all the $q_k$ goods (since their value-ratio is more than $1$) and agent $2$ all the $\bar{q}_k$ goods (since their value-ratio is less than~$1$).

Thus, a fair fPO allocation with $0$ sharings exists if and only if $a_1,\dots,a_p$ can be partitioned into two subsets of equal sum.

We have reduced \textsc{Partition} to an allocation problem with $\D{v}=p-1\leq C\cdot m^\alpha$. Since $m=m(p)$ is bounded by a polynomial in $p$, the length of binary representation of $\mathbf{v}$ is bounded by a polynomial of the size of \textsc{Partition} instance.
\end{proof}

The following proposition completes the proof of Theorem~\ref{thm:nagents-different-easy}:
it shows that the set of consumption graphs of all fPO allocations can be efficiently enumerated.
\begin{proposition}
	\label{prop:nagents-po}
	For every fixed number of agents $n\geq 2$, the set of all fPO graphs $\G{v}:=\left\{\ucg{z}\,:\ \mathbf{z} \mbox{ is fPO for } \mathbf{v} \right\}$ can be enumerated using $O\left(3^{\frac{n(n-1)}{2}\cdot \D{v}} \cdot m^{\frac{n(n-1)}{2}+2}\right)$ operations. In particular, for  $\mathbf{v}$ with logarithmic degeneracy ($\D{v}\leq C\cdot\log(m)$ as in Theorem~\ref{thm:nagents-different-easy}), the algorithm runs in strongly-polynomial time.
	
The total number of graphs in $\G{v}$ satisfies the upper bound\,
\begin{equation}\label{eq_number_PO_ucg}
	\big|\G{v}\big|\leq 3^{(1+\D{v})\frac{n(n-1)}{2}}\cdot m^{\frac{n(n-1)}{2}}.
	\end{equation}
\end{proposition}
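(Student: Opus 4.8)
The plan is to encode each fPO consumption graph by a small amount of \emph{pairwise} data and then enumerate all admissible data. The starting point is Lemma~\ref{lem:po-weights}: every fPO allocation $\mathbf z$ admits a positive weight vector $\lambda$ with $z_{i,o}>0 \Rightarrow \lambda_i v_{i,o}\geq \lambda_j v_{j,o}$ for all $j$. Equivalently, by Corollary~\ref{corr:2agents-po}, for each unordered pair $\{i,j\}$ there is a threshold $t_{i,j}=\lambda_j/\lambda_i$ that splits the objects according to the ratio $v_{i,o}/v_{j,o}$ (with the sign conventions of the corollary): objects strictly on $i$'s side cannot be consumed by $j$, objects strictly on $j$'s side cannot be consumed by $i$, and only objects lying exactly at the threshold, where $\lambda_i v_{i,o}=\lambda_j v_{j,o}$, may be shared by the two. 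I would therefore describe the ``local structure'' of $\mathbf z$ at the pair $\{i,j\}$ by (i) the position of $t_{i,j}$ among the sorted ratios and (ii) for each object sitting at the threshold, a label in $\{i,j,\text{shared}\}$.

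I would then bound the number of such local structures per pair. Sorting the objects by $v_{i,o}/v_{j,o}$, there are at most $m$ distinct ratio values, so at most $m$ meaningful placements of $t_{i,j}$ (a ``clean cut'' between two consecutive values is subsumed by placing $t_{i,j}$ at a value and sending all tied objects to one side). By the definition of \D{v}, any single ratio value is attained by at most $\D{v}+1$ objects, so the threshold block has at most $\D{v}+1$ objects, giving at most $3^{1+\D{v}}$ labelings. Hence each pair contributes at most $m\cdot 3^{1+\D{v}}$ local structures.

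The key claim is that the whole graph \ucg{z} is recovered from the tuple of $\tfrac{n(n-1)}{2}$ local structures: the edge $(i,o)$ is present iff $i$ is not excluded by any pair containing it, i.e.\ iff $o$ lies weakly on $i$'s side of $t_{i,j}$ for every $j\neq i$, with ties resolved by the threshold labels. This reconstruction is robust to the degenerate situation where $o$ sits at the $\{i,j\}$-threshold but is globally dominated by a third agent $k$: the pair $\{i,k\}$ (resp.\ $\{j,k\}$) already excludes $i$ (resp.\ $j$), so whatever label we assign to $o$ for $\{i,j\}$, the reconstruction correctly drops $i,j$ from $o$'s consumer set; this is exactly why three labels suffice and no fourth ``neither'' label is needed. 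Since every fPO allocation yields such a tuple, the map (fPO allocation)\,$\to$\,(tuple)\,$\to$\,\ucg{z} shows
\[
\big|\G{v}\big|\ \leq\ \big(m\cdot 3^{1+\D{v}}\big)^{\frac{n(n-1)}{2}}\ =\ 3^{(1+\D{v})\frac{n(n-1)}{2}}\cdot m^{\frac{n(n-1)}{2}},
\]
which is \eqref{eq_number_PO_ucg}. For the running time I would iterate over all tuples, reconstruct each candidate $G$ in $O(nm)$ time, and test whether $G\in\G{v}$: using the fact (Subsection~\ref{sub:agent-object-graphs}) that $\mathbf v$ and \ucg{z} determine \dcg{z}, I build the directed graph from $G$ and apply Lemma~\ref{lem:po-cycle}, so $G$ is realized by an fPO allocation iff every object has a consumer, non-maliciousness holds, and \dcg{} has no cycle of product $<1$ --- a check costing $O(nm(n+m))$ by Lemma~\ref{lem:po-check}. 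Deduplicating the accepted graphs, the total cost is $O\!\big((m\,3^{1+\D{v}})^{n(n-1)/2}\cdot nm(n+m)\big)$, which for fixed $n$ is $O\!\big(3^{\frac{n(n-1)}{2}\D{v}}\,m^{\frac{n(n-1)}{2}+2}\big)$, since $3^{n(n-1)/2}$ and the powers of $n$ are constants.

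The main obstacle I anticipate is the reconstruction step: making precise that bounded pairwise data determine the global graph, and in particular handling ties at a threshold together with the ``object taken by a third agent'' case without introducing an extra label per object (which would inflate the $3^{1+\D{v}}$ factor). Once that is settled, the threshold-and-degeneracy bookkeeping is routine, and the per-candidate verification through Lemma~\ref{lem:po-cycle} is precisely what contributes the extra $m^{2}$ in the operation count.
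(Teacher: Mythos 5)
Your proof is correct in substance but takes a genuinely different route from the paper. The paper enumerates \G{v} \emph{incrementally}: Lemma~\ref{lem:enumerate-level2} handles two agents, and Lemma~\ref{lem:enumerate-levelk} extends the family agent by agent, obtaining each graph in $\G{v^{k+1}}$ from one in $\G{v^{k}}$ by re-splitting each old agent's bundle with the newcomer via a two-agent subproblem, then filtering with the fPO test of Lemma~\ref{lem:po-check}; the bound \eqref{eq_number_PO_ucg} falls out by telescoping the recursion $|\G{v^{k+1}}|\leq|\G{v^{k}}|\cdot 3^{(\D{v}+1)k}m^{k}$. You instead encode a candidate graph directly by the product of $\tfrac{n(n-1)}{2}$ pairwise threshold structures (position of $t_{i,j}$ plus labels on the tied block), reconstruct, and filter with the same Lemma~\ref{lem:po-check}. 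Both arguments rest on the same two pillars --- the pairwise threshold structure of Corollary~\ref{corr:2agents-po} and the graph-based fPO check --- and yield identical bounds. Your route gives \eqref{eq_number_PO_ucg} in one shot rather than by telescoping, which is arguably cleaner; the cost is that the reconstruction step carries all the delicacy, and two points there deserve more care than you give them. First, for a pair $\{i,j\}$ the objects that may be shared are those with $\lambda_i v_{i,o}=\lambda_j v_{j,o}$ \emph{together with} the zero objects $v_{i,o}=v_{j,o}=0$ (Corollary~\ref{corr:2agents-po}); your $3^{1+\D{v}}$ labelling count still holds because a zero object satisfies $v_{i,o}=r\,v_{j,o}$ for every $r$ and hence lies in the same degeneracy class as the threshold block, but this needs saying. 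Second, your "globally dominated by a third agent" argument for why no fourth label is needed also has to cover the case where the third agent merely \emph{ties} (so $o$ sits at the $\{i,k\}$ and $\{j,k\}$ thresholds as well); the exclusion then comes from the labels of those pairs rather than from strict domination, and one should check that the labels extracted from an actual fPO allocation are mutually consistent so that the reconstruction returns exactly $\ucg{z}$. The paper's induction sidesteps this global consistency issue entirely, which is presumably why it was chosen, at the price of a slightly more cumbersome complexity accounting.
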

Proposition~\ref{prop:nagents-po} is proved by the following two lemmas.  {
We enumerate all fPO graphs by iteratively adding agents. 
We start by enumerating fPO graphs for the first two agents (Lemma \ref{lem:enumerate-level2} and Figure \ref{fig:cg-tree-level2}).
Then we show that, given all fPO graphs for agents $1,\ldots,k$, we can efficiently enumerate all fPO graphs for agents $1,\ldots,k+1$
(Lemma \ref{lem:enumerate-levelk} and Figure \ref{fig:cg-tree-level3}).
}
\begin{figure}
\begin{center}
\includegraphics[width=0.8\textwidth,clip, trim = 2cm 1cm 2cm 1cm]{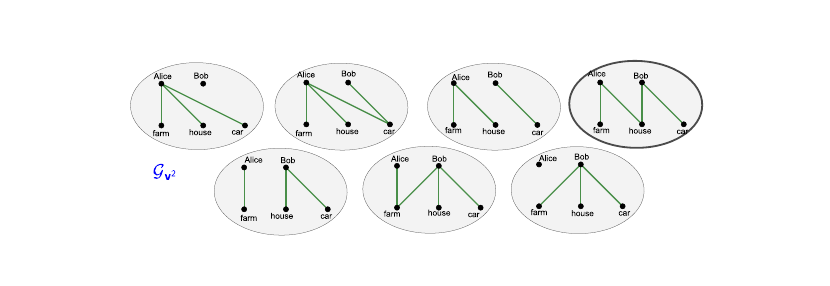}
\end{center}
\caption{
\label{fig:cg-tree-level2}
Enumerating the fPO consumption graphs of allocations between two agents (Lemma \ref{lem:enumerate-level2}), in a non-degenerate instance with $m=3$ pure goods, where the agents' valuations are as in Figure \ref{fig:consumption-graph}/Left.
There are $2 m + 1 = 7$ consumption-graphs of fPO allocations between Alice and Bob.
The emphasized graph at the top-right is expanded in Figure \ref{fig:cg-tree-level3} below.
}
\ifdefined\Description
\Description[Enumerating the fPO consumption graphs of allocations between two agents]{Enumerating the fPO consumption graphs of allocations between two agents}
\fi
\end{figure}

We note that there are several alternative approaches to enumerate fPO consumption graphs. \citet{branzei2019chores} recover a subset of fPO graphs by their $2$-agent projections in order to compute the so-called competitive allocations of bads. \citet{devanur_kannan2008market} use a complicated technique of cell-enumeration from computational algebraic geometry for a similar problem with goods. An alternative dynamic-programming approach for goods was outlined by~\citet{dw2019enumerate}. That algorithm  sequentially adds new goods and presumably runs in time $O(2^n m^2 |\G{v}|)$, 
however, the construction does not provide an a-priori polynomial upper bound on~$|\G{v}|$.

\begin{lemma}\label{lem:enumerate-level2}
For an instance $\mathbf{v}$ with $2$ agents and $m$ objects: (a) the total number of graphs in $\G{v}$ is at most $3m\cdot 3^{\D{v}}$.
~~(b)
If all the objects $o$ with $v_{1,o}\cdot v_{2,o}> 0$, are ordered by the ratio $\frac{|v_{1,o}|}{|v_{2,o}|}$, then  $\G{v}$ can be enumerated using  $O\left(m\cdot 3^{\D{v}}\right)$ operations.
\end{lemma}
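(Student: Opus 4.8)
The plan is to translate fractional Pareto-optimality for two agents into a statement about a single threshold, and then simply count the consumption graphs that a threshold can induce.

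First I would use that for $n=2$ the necessary condition of Corollary~\ref{corr:2agents-po} is also sufficient: every simple cycle of a two-agent bipartite consumption graph alternates between the two agents and hence has length $4$, so by Lemma~\ref{lem:po-cycle} (together with Lemma~\ref{lem:po-weights}) an allocation $\mathbf{z}$ is fPO iff there is a threshold $t=t_{1,2}>0$ with which every object is allocated consistently. Fixing $t$, each object $o$ with $v_{1,o}v_{2,o}\neq 0$ and ratio $|v_{1,o}|/|v_{2,o}|\neq t$ is forced to a single agent (the direction depending on whether $o$ is a good or a bad), each object with $v_{1,o}v_{2,o}<0$ is forced to its positive-value owner, and the only objects whose edges in $\ucg{z}$ are \emph{not} pinned by $t$ are those in $S_t:=\{o:\ v_{1,o}=t\,v_{2,o}\}$ — a set that automatically absorbs every object with $v_{1,o}=v_{2,o}=0$. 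Each $o\in S_t$ admits exactly three local patterns in $\ucg{z}$: an edge to agent $1$ only, to agent $2$ only, or to both. Conversely every such choice is realizable by an fPO allocation (take weights with $\lambda_2/\lambda_1=t$ in Lemma~\ref{lem:po-weights} and split each shared object evenly), so the fPO graphs at threshold $t$ are in bijection with the $3^{|S_t|}$ assignments of $S_t$.

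For part (a), I would order the objects with $v_{1,o}v_{2,o}>0$ by ratio, let $r_1>\dots>r_s$ be the distinct ratio values with multiplicities $m_\ell$, and observe that if $t$ lies strictly between two consecutive ratios then $S_t$ reduces to the zero objects and the induced graph coincides with a ``corner'' of the graph obtained at the adjacent value $t=r_\ell$ (the corner where every object of $S_{r_\ell}$ is pushed fully to one side). Hence every fPO graph already occurs at some $t\in\{r_1,\dots,r_s\}$. By the definition of the degree of degeneracy $|S_{r_\ell}|\le\D{v}+1$, and $s\le m$, giving
$$\big|\G{v}\big|\ \le\ \sum_{\ell=1}^{s}3^{|S_{r_\ell}|}\ \le\ s\cdot 3^{\D{v}+1}\ \le\ 3m\cdot 3^{\D{v}},$$
the claimed bound; the boundary case $s=0$ contributes the single block $3^{\D{v}+1}$ and is covered as well. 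For part (b) I would sweep $t$ downward through the pre-sorted ratios, maintaining the current forced graph. At each breakpoint $t=r_\ell$ I enumerate the $3^{m_\ell}\le 3^{\D{v}+1}$ local assignments of $S_{r_\ell}$ in Gray-code order, so consecutive graphs differ by a single edge and each is emitted in amortized $O(1)$ time (reporting only the changed edge). Crossing from one interval to the next flips the $m_\ell$ edges at level $r_\ell$, for a total of $O(m)$ over the whole sweep, so the cost is $O\big(m+\sum_\ell 3^{m_\ell}\big)=O\big(m\cdot 3^{\D{v}}\big)$, matching the count of part (a) up to a constant.

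The main obstacle is the bookkeeping behind the two tight constants. For (a) one must check that no fPO graph is missed while the interval graphs are not over-counted against the exact-ratio graphs, and that objects with $v_{1,o}=v_{2,o}=0$ are correctly folded into $\D{v}$ rather than producing an uncontrolled extra factor. For (b) the delicate point is that one cannot afford to write out each of the $\Theta(m\,3^{\D{v}})$ graphs explicitly, as that would cost an extra factor of $m$; the incremental emission — Gray-code within a level and $O(1)$-per-object flips across interval transitions — is exactly what keeps the running time linear in the output size.
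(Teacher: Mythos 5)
Your proposal is correct and follows essentially the same route as the paper: reduce to the threshold structure of Corollary~\ref{corr:2agents-po}, note that only the (at most $\D{v}+1$) objects tied at the threshold $t$ (including the all-zero objects) admit three consumption patterns each, bound the number of relevant thresholds by $m$, and enumerate by sweeping $t$ with incremental updates. The only slight divergence is cosmetic — your explicit realizability check via Lemma~\ref{lem:po-weights} and the Gray-code emission are spelled out more than in the paper, and your case analysis omits objects where exactly one agent's value is zero, but those are forced by non-maliciousness and do not affect the count.
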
	
\begin{proof}
We generalize the construction used in Theorem~\ref{thm:2agents-different-easy}, from pure goods and non-degenerate valuations, to arbitrary objects and an arbitrary degeneracy level.
	
We use the following notation: 
\begin{itemize}
\item $A_{>0}=\{o\in[m] \ : \ v_{1,o}>0, \ v_{2,o}>0\}$ for the set of pure goods;
\item  $A_{<0}=\{o\in[m] \ : \ v_{1,o}<0,\ v_{2,o}<0\}$ for the set of bads,  
\item $A_{0}=\{o\in[m]\ : \  v_{1,o}=v_{2,o}=0\}$ for zero-valued objects, 
\item $A_{\pm}$ for all the remaining {impure goods and} neutral objects;
\item 
For a given positive number $t$, we define
$A_=(t) := 
\big\{ o\in A_{>0}\cup A_{<0}\ : \ \frac{|v_{1,o}|}{|v_{2,o}|}=t \big\}$.
\end{itemize}

(a) 
By non-maliciousness,
there is no flexibility in allocating objects from $A_{\pm}$: they are consumed by the agent with larger $v_{i,o}$ at any fPO allocation. In contrast, objects from $A_0$ can be allocated arbitrarily. Such zero objects contribute $|A_0|$ to $\D{v}$ and lead to $3^{|A_0|}$ allocation possibilities (each object is consumed either by agent $1$, or by agent $2$, or by both).
Note that if we were only interested in final allocations, then zero objects could be given to one of the agents arbitrarily. 
However, in Lemma \ref{lem:enumerate-level2} we count all possible consumption graphs, which are important for the later steps of the algorithm. Therefore, we consider all options for zero objects too.

The allocation of objects from $A_{>0}\cup A_{<0}$ is determined by the value-ratio threshold $t_{1,2}$ of Corollary~\ref{corr:2agents-po}. 
Consider the set $T=\left\{\frac{|v_{1,o}|}{|v_{2,o}|},\ o\in A_{>0}\cup A_{<0}\right\}$. To cover all the fPO allocations, it is enough to consider $|T|$ situations, when $t_{1,2}$ equals one of the elements of $T$. Then objects $o\in A_{>0}$ with $\frac{v_{1,o}}{v_{2,o}}>t_{1,2}$ are allocated to agent $1$ and with $\frac{v_{1,o}}{v_{2,o}}<t_{1,2}$ to agent $2$; symmetrically, for $o\in A_{<0}$, bads with $\frac{v_{1,o}}{v_{2,o}}>t_{1,2}$ go to agent $2$, while those with $\frac{v_{1,o}}{v_{2,o}}<t_{1,2}$ to agent $1$. The remaining objects $A_{=}(t_{1,2})$ are allocated arbitrarily between agents, resulting in $3^{|A_{=}(t_{1,2})|}$ possibilities.
All in all:

$$|\G{v}|\leq  3^{|A_0|}\cdot |T|\cdot 3^{\max_{t\in T} |A_=(t)|}.$$
Since $|T|\leq m$ and $|A_0|+|A_=(t)|\leq 1+\D{v}$ for any $t>0$,  we get the claimed upper bound. 
Note that the bound is not tight --- for $2$ agents and non-degenerate $\mathbf{v}$,  we can get $2m+1$ instead of $3m$ (see the proof of Theorem~\ref{thm:2agents-different-easy}).

(b) If $t$ and $t'$ are two consecutive elements of $T$, then passing from $t$ to $t'$ involves reallocation of objects from $A_{=}(t)$ and $A_{=}(t')$ only. This leads to an overall running time proportional to the total number of graphs $|\G{v}|$.%
\end{proof}

Given the valuation matrix $\mathbf{v}=(v_{i,o})_{i\in[n],o\in[m]}$ {and $k\in[n]$}, denote by $\mathbf{v}^{k}$ the valuation of the first $k$ agents: $\mathbf{v}^k=(v_{i,o})_{i\in[k],o\in[m]}$. 
The previous lemma
tells that $|\G{\mathbf{v}^{2}}|\leq 3^{\D{\mathbf{v}}+1}\cdot m$.
The next lemma relates $\G{\mathbf{v}^{k+1}}$ to $\G{\mathbf{v}^{k}}$.
\begin{figure}
\begin{center}
\includegraphics[width=0.9\textwidth,clip, trim = 2.3cm 1cm 2.4cm 1cm]{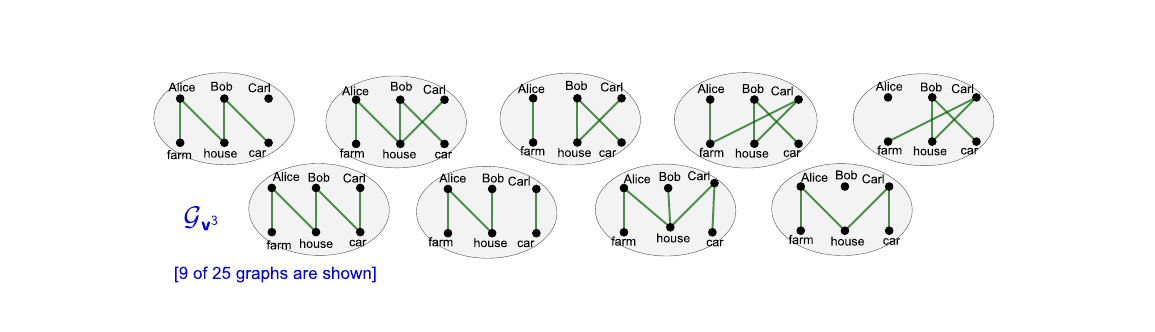}
\end{center}
\caption{
\label{fig:cg-tree-level3}
Enumerating the fPO consumption graphs of allocations among three agents (Lemma \ref{lem:enumerate-levelk}) in a non-degenerate instance with $m=3$ pure goods. 
Shows some of the consumption graphs in $\G{v^{3}}$ derived from the top-right consumption graph in Figure \ref{fig:cg-tree-level2}.
The graphs in the top row are derived by sharing Alice's goods (the farm and the house) with Carl;
the graphs in the second row 
are derived by sharing Bob's goods (the car and the house) with Carl.
}
\ifdefined\Description
\Description[Enumerating the fPO consumption graphs of allocations among three agents]{Enumerating the fPO consumption graphs of allocations among three agents}
\fi
\end{figure}
\begin{lemma}
	\label{lem:enumerate-levelk}
	For   a valuation $\mathbf{v}$ with $n\geq 3$ agents and  $k$ such that $2\leq k\leq n-1$:
	
	(a)
	The number of graphs in $\G{v^{k+1}}$ satisfies the upper bound
	\begin{equation}\label{eq:dyn-prog_upper_bound}
	\big|\G{v^{k+1}}\big|\leq \big|\G{v^{k}}\big|\cdot 3^{(\D{v}+1)k}\cdot m^k
	\end{equation}

(b) All the graphs  in $\G{v^{k+1}}$  can be enumerated using 

$$O\left(\big|\G{v^k}\big|\cdot  3^{\D{v}\cdot k} \cdot m^{2+k}\right)$$  
	operations if $\G{v^{k}}$ is given as the input.
\end{lemma}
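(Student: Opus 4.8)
The plan is to prove both parts by an inductive ``add one agent'' step: starting from the complete list $\G{v^k}$ of fPO consumption graphs for the first $k$ agents, I build every fPO graph for the first $k+1$ agents by inserting agent $k+1$. The conceptual engine is the Negishi-weight characterization of Lemma~\ref{lem:po-weights}. If $\mathbf{z}$ is fPO for $\mathbf{v}^{k+1}$ with weights $(\lambda_1,\dots,\lambda_{k+1})$, then deleting agent $k+1$ and reassigning each of its portions to an agent in $\arg\max_{i\in[k]}\lambda_i v_{i,o}$ yields an allocation that is fPO for $\mathbf{v}^{k}$ with weights $(\lambda_1,\dots,\lambda_k)$; its consumption graph therefore lies in $\G{v^k}$. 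This gives a ``projection'' sending each fPO graph for $k+1$ agents to a parent graph in $\G{v^k}$, and the whole argument amounts to bounding and enumerating the fibers of this projection.

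To parametrize a fiber, I would describe the insertion of agent $k+1$ purely through its pairwise interaction with each of the first $k$ agents. By Corollary~\ref{corr:2agents-po}, for each pair $(i,k+1)$ the set of objects that $k+1$ may wrest from $i$, keep shared, or leave untouched is governed by a single value-ratio threshold $t_{i,k+1}=\lambda_{k+1}/\lambda_i$; there are at most $m$ distinct relevant thresholds, and exactly at the threshold at most $\D{v}+1$ objects (by the definition of the degree of degeneracy) remain tied and can each be consumed by $i$ only, by $k+1$ only, or by both. The key structural claim to establish is a reconstruction statement: given the parent graph $G^k$ together with the vector of $k$ thresholds and the tie-sharing pattern at each of them, the graph $G^{k+1}$ is uniquely determined — for every object $o$ one compares $k+1$ against the current consumer set of $o$ in $G^k$ (which, being a set of $\lambda$-maximizers, yields a single verdict) and updates the consumer set accordingly. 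Since every genuine fPO graph arises from its own weights as such a reconstruction, the number of children of a fixed $G^k$ is at most $m^k\cdot 3^{(\D{v}+1)k}$, which multiplied over $\G{v^k}$ yields the bound~\eqref{eq:dyn-prog_upper_bound} of part~(a).

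For the algorithmic part~(b) I would turn this correspondence into a procedure: for each $G^k\in\G{v^k}$, sweep the $k$ threshold vectors and tie-sharing patterns, build the candidate $G^{k+1}$ by the reconstruction rule, and — crucially — test it for genuine fPO. Organizing the sweep so that consecutive thresholds differ only by reallocating boundary objects (exactly as in the incremental argument of Lemma~\ref{lem:enumerate-level2}(b)) keeps the amortized cost of producing each candidate low, and testing a candidate for fractional Pareto-optimality costs $O(m^2)$ for fixed $n$ by Lemma~\ref{lem:po-check}; balancing these factors gives the claimed $O\!\big(|\G{v^k}|\cdot 3^{\D{v}\,k}\cdot m^{2+k}\big)$ running time.

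The main obstacle is precisely the gap between the pairwise threshold conditions and true fractional Pareto-optimality. As noted after Corollary~\ref{corr:2agents-po}, for $n\ge 3$ the pairwise conditions only forbid $4$-cycles of product below $1$ in the directed consumption graph and say nothing about longer cycles, so the reconstruction rule will in general generate spurious candidates that are pairwise-consistent but not globally fPO. Part~(a) tolerates this because it is only an upper bound, but part~(b) must filter the spurious graphs out, which is exactly why the fPO test of Lemma~\ref{lem:po-cycle}/\ref{lem:po-check} is invoked on every candidate. A secondary technical point needing care is the correctness of the reconstruction at tied objects: one must check that an object tied between $k+1$ and a consumer set $S_o$ always receives a consistent verdict from all of $S_o$ (guaranteed because members of $S_o$ share the same value $\lambda_i v_{i,o}$), so that the per-pair tie decisions indeed assemble into a well-defined consumer set and the projection/reconstruction correspondence is a genuine inverse on $\G{v^{k+1}}$.
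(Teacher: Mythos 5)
Your proposal is correct and follows essentially the same route as the paper: project each fPO graph for $k+1$ agents down to a parent in $\G{v^{k}}$ via the Negishi weights of Lemma~\ref{lem:po-weights}, parametrize the fiber by the pairwise interaction of the newcomer with each old agent, and filter out the pairwise-consistent but globally non-fPO candidates with the test of Lemma~\ref{lem:po-check}. The only cosmetic difference is that the paper packages the per-pair threshold-and-tie-pattern count as an application of Lemma~\ref{lem:enumerate-level2} to the $2$-agent subproblems $\mathbf{v}^{i,k+1}(A_i(G))$, whereas you unpack it directly from Corollary~\ref{corr:2agents-po}; the bounds and the runtime analysis coincide.
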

Since $\mathbf{v}^n=\mathbf{v}$, starting from $\mathbf{v}^2$ (covered by Lemma~\ref{lem:enumerate-level2}) and repeatedly applying Lemma~\ref{lem:enumerate-levelk}  we get both the algorithmic part of Proposition~\ref{prop:nagents-po} as well as the upper bound~\eqref{eq_number_PO_ucg}.
\begin{proof}[Proof of Lemma~\ref{lem:enumerate-levelk}.]
The idea is that any graph $G'\in \G{v^{k+1}}$ can be obtained from some $G\in \G{v}$ by erasing some of the edges between objects and ``old'' agents $i\in[k]$ and tracing new edges to a ``newcomer'' $k+1$ in such a way that for each old agent $i$, the allocation of objects between $i$ and the newcomer is fPO in the $2$-agent subproblem.
	
First, we check that any fPO allocation $\mathbf{z'}$ among $k+1$ agents can be obtained from an fPO allocation $\mathbf{z}$ among $k$ by reallocating some objects to the newcomer. Indeed, by Lemma~\ref{lem:po-weights}, there exists a vector of weights $\lambda=(\lambda_i)_{i\in[k+1]}$ with strictly positive components such that in allocation $\mathbf{z'}$ each object $o$ is consumed by agents $i$ with a highest $\lambda_i v_{i,o}$. Giving the share $z_{k+1,o}'$ of each object $o$ consumed by $k+1$ to an agent $i\in[k]$ with a highest $\lambda_i v_{i,o}$ defines the desired allocation $\mathbf{z}$; fPO follows from the same Lemma~\ref{lem:po-weights} with vector $(\lambda_i)_{i\in[k]}$.

Second, we describe how the reallocation looks in terms of graphs. For $G\in \G{v^k}$, consider the set of objects consumed by agent $i$: $A_i(G)=\{o\in [m]\ :\ \mbox{there is an edge between $i$ and $o$}\}$. For each $i$ consider a $2$-agent problem $\mathbf{v}^{i,k+1}(A_i)$, where $i$ and $k+1$ divide the set $A_i(G)$ of objects between themselves. Pick a graph $G^{i,k+1}\in \G{{v}^{i,k+1}(A_i)}$ for each $i\in [k]$, which prescribes how objects in $A_i(G)$ are reallocated between $i$ and $k+1$.
Define the graph $G'$ as follows: agent $i\in[k]$ and object $o$ are connected by an edge if they are connected in $G^{i,k+1}$; an edge between $k+1$ and $o$ is traced if this edge exists in $G^{i,k+1}$ for at least one agent $i\in[k]$.

Denote by $\G{v^{k+1}}'$ the set of graphs $G'$ that we get, when $G$ ranges over $\G{v^k}$ and $G^{i,k+1}$ over $ \G{{v}^{i,k+1}(A_i)}$ for all $i\in [k]$. By the construction, $\G{v^{k+1}}'$ contains all the consumption graphs of fPO allocations for $\mathbf{v}^{k+1}$, but may contain some non-fPO graphs, since the reallocation preserves the fPO condition only for pairs of agents $i,j\in[k+1]$. In order to get $\G{v^{k+1}}$, each graph $G\in \G{v^{k+1}}'$ must be tested for fPO using Lemma~\ref{lem:po-check} and those graphs that do not pass the test must be eliminated.

(a)
Let us estimate the total number of graphs in $\G{v^{k+1}}$. For each  $G\in\G{v^{k}}$, the set $\G{v^{i,k+1}(A_i)}$ contains at most $|A_i(G)|\cdot 3^{\D{v}+1}$ graphs (see Lemma~\ref{lem:enumerate-level2}). Therefore, the total number of graphs $G'$ obtained from $G$ is bounded by $\left(3^{(\D{v}+1)}\right)^k\cdot \prod_{i=1}^k |A_i(G)|\leq \left(3^{(\D{v}+1)}\cdot m\right)^k$ and we get~\eqref{eq:dyn-prog_upper_bound}.

 
(b)
The bound on time-complexity follows from Lemma~\ref{lem:enumerate-level2} as well. For each $G$ and $i\in[k]$, computing $\G{v^{i,k+1}(A_i)}$ takes $O(m\cdot 3^{\D{v}})$ operations if prior to that for each pair of agents, $i\in [k]$ and $k+1$, objects with non-zero values are ordered by $\frac{v_{i,o}}{v_{k+1,o}}$. Thus, all $G'$ for a given $G$ are enumerated in time $O\left(\left(m\cdot 3^{\D{v}}\right)^k\right)$ and the time needed for reordering the objects is absorbed by this expression.
Checking fPO takes additional $O(m^2)$ for each $G'$ by Lemma~\ref{lem:po-check}. Thus, the overall time complexity is $O\left(\big|\G{v^k}\big|\cdot m^{2+k}\cdot 3^{\D{v}\cdot k} \right)$%
.
\end{proof}


\section{Implementation and Experiments.}\label{sec:simulations}
\subsection{Practical Considerations.}\label{subsect:practical_cons}
The algorithm from Theorem \ref{thm:nagents-different-easy} was implemented in Python.
The code was written by Eliyahu Satat (an undergraduate student at Ariel University); see
\url{https://github.com/erelsgl/fairpy/blob/master/fairpy/items/min_sharing.py}.

In preliminary experiments with random instances, we found out that many instances have fewer than the upper bound of $n-1$ sharings. To take advantage of this finding, we implemented the following variant of the algorithm:

\begin{mdframed}
\begin{minipage}{\textwidth}
\begin{itemize}
\item For $s := 0, \ldots, n-1$:
\begin{itemize}
\item Run the algorithm of Theorem \ref{thm:nagents-different-easy} with an upper bound of {$s$} on the number of sharings.
I.e., a descendant of the current consumption graph is not explored if it has more than $s$ sharings.
\end{itemize}
\end{itemize}
\end{minipage}
\end{mdframed}

Essentially, we first look for allocations with $0$ sharings, then with $1$ sharing, etc. 
This does not affect the worst-case runtime, but speeds up the algorithm in practice whenever the instance admits an allocation with fewer than $n-1$ sharings. Indeed, 
the algorithm avoids enumerating all the fPO consumption graphs and computes only those that have at most $s$ sharings.

Additionally, we stop exploring the descendants of the current graph if it cannot lead to a proportional allocation. This is easy to check: for each agent $i$, calculate the sum of values of all goods that are adjacent to $i$ in the graph. If the sum is less than $\frac{1}{n}\sum_{o\in[m]} v_{i,o}$, then no descendant of this graph can result in a proportional allocation.


\subsection{Experiments.}

We ran our algorithm on real fair division instances with goods from \url{spliddit.org} \citep{Goldman2015Spliddit}, 
which were kindly shared with us by Nisarg Shah. 
Spliddit can be used either in non-demo mode (users create the instance) or in demo mode (the platform suggests an instance that can be then adjusted by users). We focused on non-demo instances only:
the database contained $717$ such instances that were recorded as of 7/2020.
We restricted our analysis to $703$ instances with at most $8$ agents.
For each instance in this set, we computed the minimal number of sharings in an allocation that is PROP and fPO or EF and fPO.

\subsubsection*{CEEI as a benchmark.} We compared the outcome of our algorithm with the benchmark of competitive equilibrium with equal incomes (CEEI), the most popular way to divide divisible goods; see Section~\ref{sec:intro} and~\ref{sec:related}. We computed CEEI via the Eisenberg-Gale convex program (maximization of the product of agents' utilities).	Typically, the optimum is unique and corresponds to the unique CEEI allocation; however, for degenerate instances the set of optima may not be a singleton, and each such optimum is a CEEI. In the case of non-uniqueness, we selected a CEEI allocation with the minimal number of sharings. To conduct this minimization, we relied the fact that, by the convexity of the Eisenberg-Gale program, all the optima  result in the same vector of utilities~$(w_i)_{i=1}^n$.  As CEEI is fPO, the minimization over CEEI is equivalent to the minimization over the set of allocations, where each agent $i$'s utility is at least $w_i$. The latter problem was solved by a straightforward  modification of our sharing-minimization algorithm for PROP with the lower bound $\frac{1}{n}\sum_{o\in[m]}v_{i,o}$  on agent $i$'s utility  replaced by 
$(1-\epsilon)w_i$,
where $\epsilon=0.001$ was chosen to avoid floating-point rounding errors.

Note that CEEI is contained in the set of fPO+EF allocations, and the set of fPO+EF allocations is a subset of fPO+PROP ones. The first inclusion is typically strict since, for generic instances with goods, CEEI picks a single allocation, while  fPO+EF is not a singleton. For $n=2$, EF is equivalent to PROP, but for $n\geq 3$, the set of fPO+EF allocations is typically a strict subset of  fPO+PROP.
Thus the minimal number of sharings weakly increases if we replace the requirement of fPO+PROP  by fPO+EF and fPO+EF by CEEI. For some instances, all these  inequalities can be strict.  Consider the following example with three agents and four goods: 
\begin{align*}
\mathbf{v} := \begin{bmatrix}
10 && 18 && 1 && 1
\\
10 && 18 && 1 && 1
\\
10 && 10 && 5 && 5
\end{bmatrix}.
\end{align*}
Running the algorithm of sharing minimization under the constraint of fPO+PROP  / fPO+EF / CEEI yields allocations with 0 / 1 / 2 sharings, respectively:
\begin{align*}
\mathbf{z}^{PROP} =
\begin{bmatrix}
1 && 0 && 0 && 0
\\
0 && 1 && 0 && 0
\\
0 && 0 && 1 && 1
\end{bmatrix}
&&
\mathbf{z}^{EF} =
\begin{bmatrix}
1 && 0 && 0 && 0
\\
0 && 0.556 && 0 && 0
\\
0 && 0.444  && 1 && 1
\end{bmatrix}
&&
\mathbf{z}^{CEEI} =
\begin{bmatrix}
0.734 && 0.296 && 0 && 0
\\
0 && 0.704 && 0 && 0
\\
0.266 && 0  && 1 && 1
\end{bmatrix}.
\end{align*}

\subsubsection*{Conclusions from the Spliddit data.}
We let the sharing-minimization algorithm run for up to $T = 1000$ seconds for each instance on a laptop HP EliteBook 840 G3. For fPO+PROP and fPO+EF,	almost all instances with at most $4$ agents were completed within this time frame. For instances that did not complete in time, the algorithm output an allocation with $n-1$ sharings (the worst-case upper bound).
Figure \ref{fig:runtime} shows how the runtime depends on the number of agents for different types of constraints. We observe that the runtime increases hyper-exponentially in the number of agents since, in the Spliddit data, instances with more agents typically have more objects. For $5$ or more agents, we observe that sharing minimization takes significantly more time for fPO+EF than for fPO+PROP. The unexpected decrease in the runtime for fPO+PROP with $7$ and $8$ agents is explained below.

\begin{figure}
\begin{center}
\includegraphics[width=0.7\textwidth]{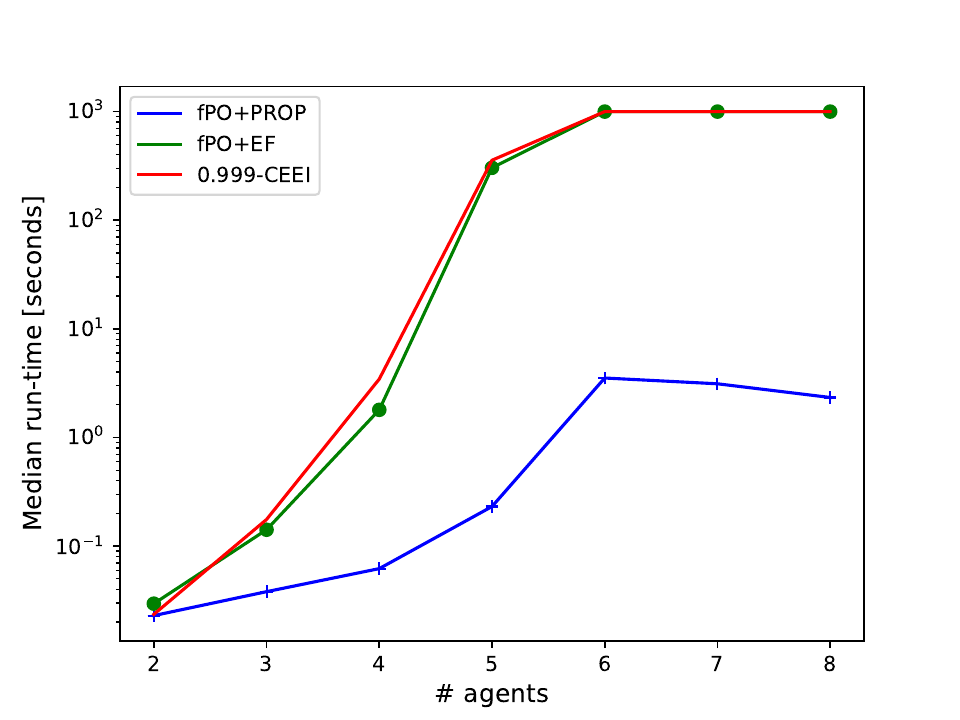}
\end{center}
\caption{\label{fig:runtime}
Median runtime of our algorithm (timed out at 1000 seconds) as a function of the number of agents.
}
\end{figure}

\begin{figure}
\begin{minipage}[c]{0.8\textwidth}
\includegraphics[width=\textwidth]{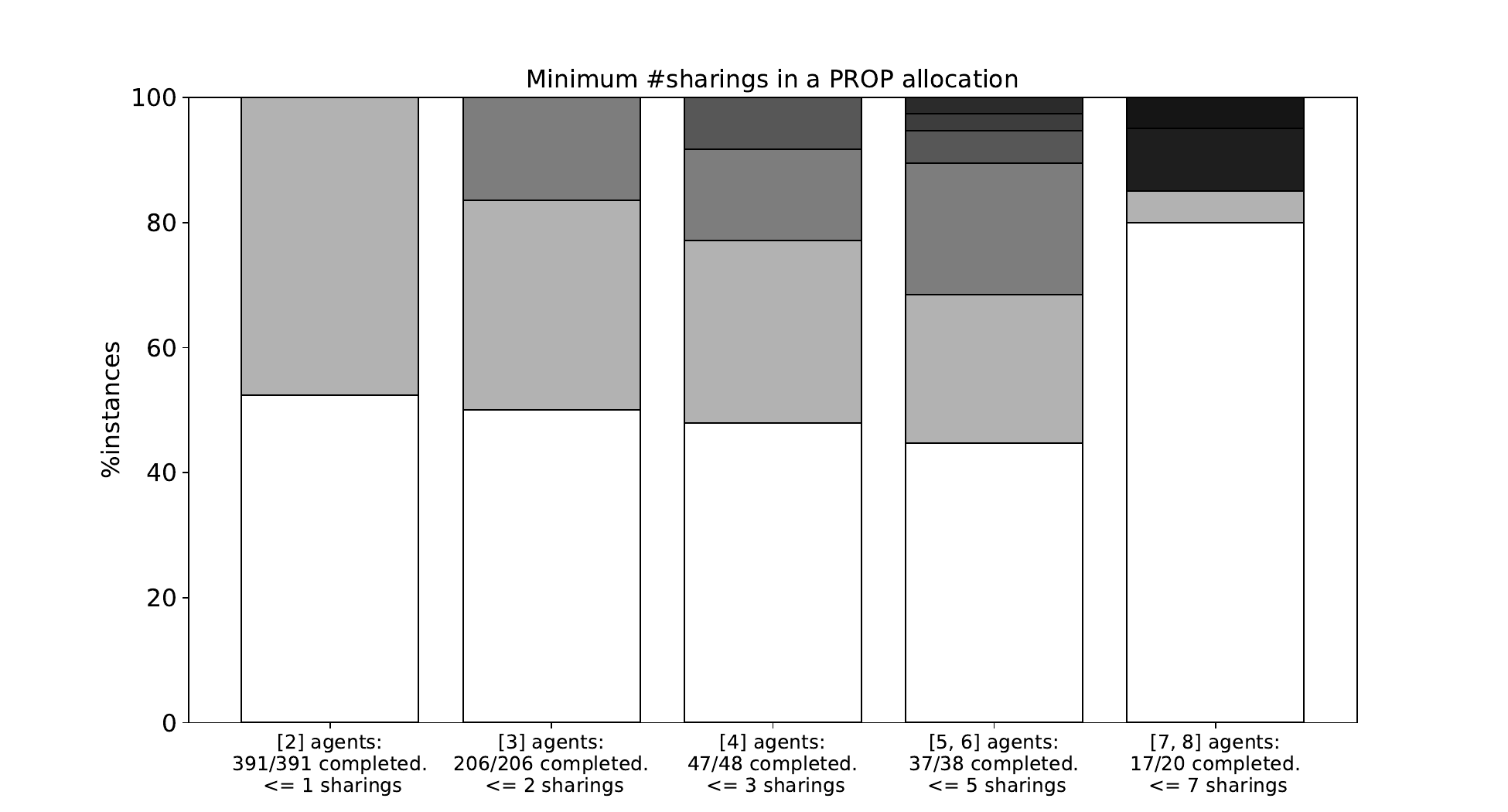}
~~
\includegraphics[width=\textwidth]{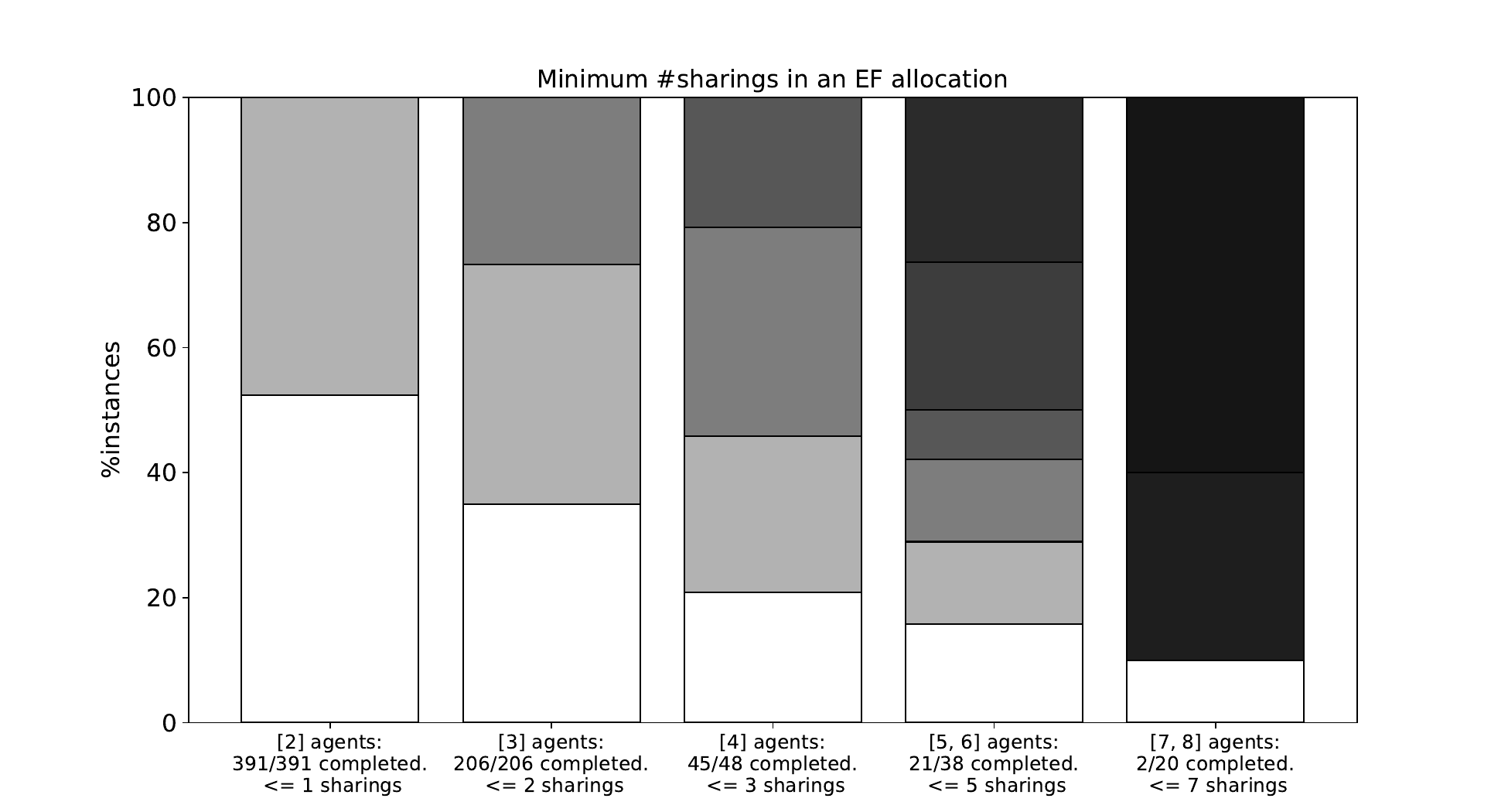}
~~
\includegraphics[width=\textwidth]{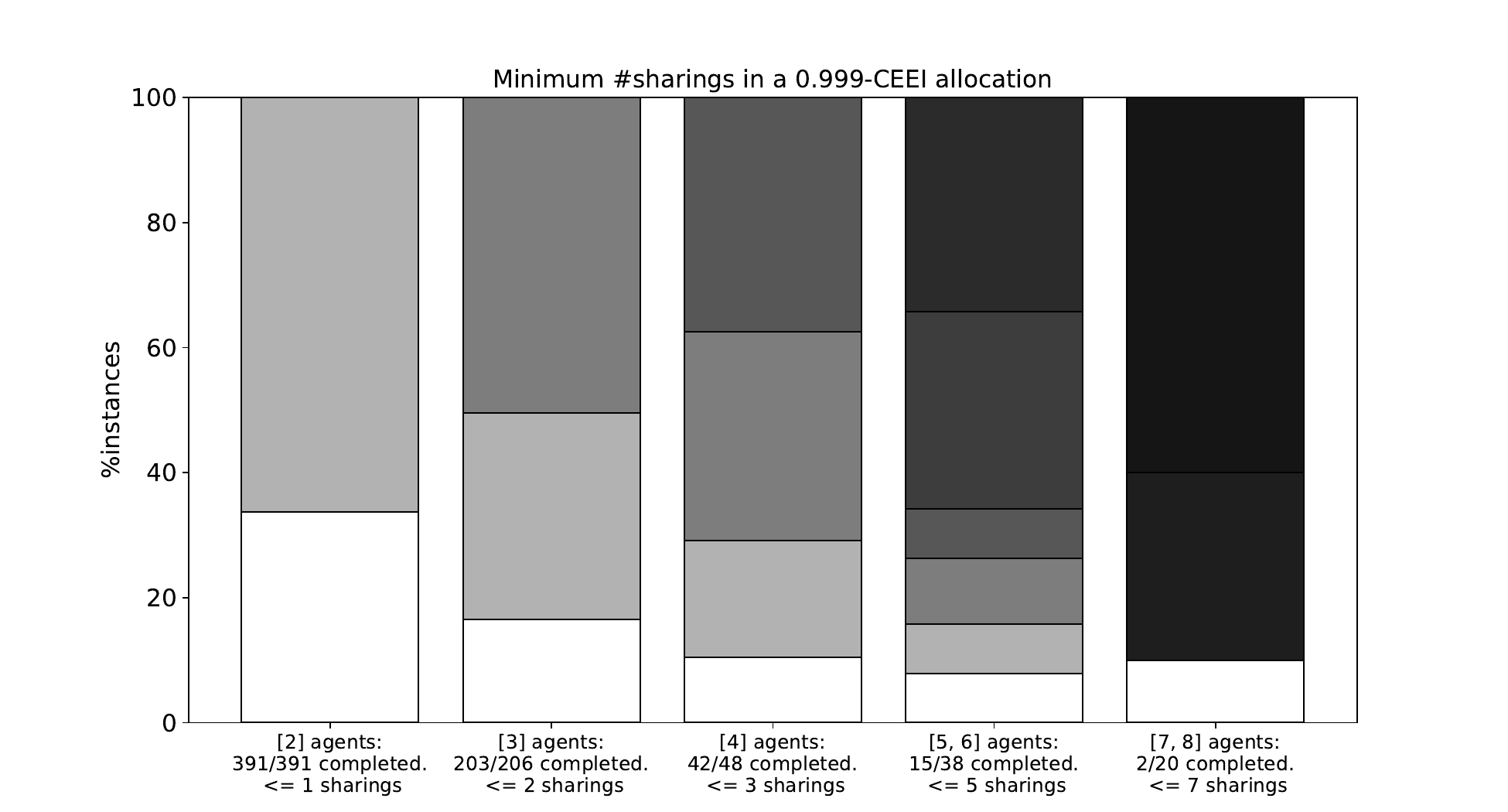}
\end{minipage}
\hfill
\vskip 1cm
\caption{
\label{fig:spliddit-sharing}
Results of experiments with Spliddit instances. See explanations in text.
}
\end{figure}

Figure \ref{fig:spliddit-sharing} represents the percentage of instances with a given 
minimal numbers of sharings, for different numbers of agents and different constraints (fPO+PROP, fPO+EF, or CEEI).
The white regions at the bottom of the bars represent the fraction of instances that admit an allocation with $0$ sharings. The adjacent light-gray regions correspond to the fraction of instances that require $1$ sharing; the next darker regions corresponds to the fraction with $2$ sharings, and so on. 

We observe several phenomena. Most of the instances admit a fair fPO allocation with fewer sharings than the worst-case bound of $n-1$. For example, approximately $50\%$ instances with $n=2$ agents have a proportional allocation with no sharings at all; for $n=3$ and $n=4$, the fraction of instances requiring $n-1$ sharings is below $20\%$ and $5\%$, respectively. Moreover, the minimal number of sharings in an fPO+PROP and fPO+EF allocations are often smaller than in CEEI; see Table \ref{tab:compare-sharings} for the exact  numbers.
Interestingly, many Spliddit instances with $7$ or more agents admit an fPO+PROP allocation with no sharings at all. Because of this, the algorithm does not explore allocations with several sharings and terminates fast, which explains the counter-intuitive decrease in the runtime for fPO+PROP in Figure~\ref{fig:runtime}.

We conclude that real fair division instances often require much fewer sharings than in the worst case. This opens room for optimization.  CEEI improves upon the worst case, and further improvements are attained by our sharing-minimization approach. This indicates the potential of sharing minimization for increasing  participants' satisfaction in practice.

\begin{table}
\begin{center}
\begin{tabular}{|c|c|c|}
\hline
 & Fewer sharings than worst-case & Fewer sharings than CEEI \\
\hline
fPO+PROP & 475/703=67=67.6\% & 255/703=36.2\% \\
\hline
fPO+EF & 416/703=59.1\% & 166/703=23.6\% \\
\hline
\end{tabular}
\end{center}
\caption{
\label{tab:compare-sharings}
Percentage of Spliddit instances 
(with at most 8 agents) 
in which the minimal number of sharings in an fPO+PROP or fPO+EF allocation is smaller than in the worst-case ($n-1$) or in CEEI.
}
\end{table}

\section{Related Work.}
\label{sec:related}

The classic economic approach to fair division assumes that resources are divisible, preferences are  general (e.g., arbitrary convex as in the theory of general equilibrium), and aims to identify   division rules that satisfy a family of requirements known as axioms \citep{thomson2011fair}; this approach often leads to impossibility results as the axioms turn out to be incompatible. A recent trend 
largely inspired by researchers with CS and AI background is to escape impossibilities by narrowing down the preference domain (most papers deal with additive utilities) and substituting the exact axioms by their quantitative relaxation, thus replacing the question of the existence  of an ideal rule by the problem of finding the best approximation to the ideal \citep{brandt2016handbook}. Our paper follows this modern methodology.  A survey by \cite{moulin2019fair} touches both approaches, discusses the role of market equilibria in fair division, contrasts divisible resources with indivisible objects, and goods with bads. We recommend it as a short but comprehensive introduction to modern fair division.
Below, we describe several strains of the literature closest to our results and focus primarily on the case of additive utilities.

\subsection{Fisher markets, CEEI, and the Eisenberg-Gale program for goods and bads}
Since the seminal paper by \cite{Varian1974Equity}, fair division has been closely connected to the theory of competitive  equilibria (CE) of exchange economies known in computer science literature as Fisher markets. Varian demonstrated that a CE in the economy where all the agents have equal endowments (CE with equal incomes or simply CEEI) is both envy-free and fractionally Pareto-optimal. Therefore, the existence of CE ---  known in the theory of general equilibrium under relatively weak assumptions of continuity, convexity, and monotonicity of preferences \citep{arrow1954existence} --- translated into the existence of fair and efficient allocations.

Under the additional assumption that the utilities are homogeneous (e.g., additive, Leontief, CES, or Cobb-Douglas), \cite{eisenberg1959consensus} showed that CEEI maximizes the product of agents' utilities, the so-called Nash product; see also \citep[Chapters 5 and 6]{nisanalgorithmic}. The convexity of this optimization problem implies  the  uniqueness of CEEI in the space of utilities as well as the approximate algorithm for computing CEEI via  standard gradient methods. For additive utilities, \cite{orlin2010improved} and \cite{garg2019strongly} developed exact combinatorial algorithms running in strongly polynomial time; their constructions implicitly use the convexity of the problem. These algorithms can also be seen as algorithms for computing envy-free fractionally Pareto optimal allocations, and, surprisingly, no algorithms except market-based are known for this problem. 

\cite{Bogomolnaia2017Competitive} considered CEEI for preferences that are homogeneous and continuous but not necessarily monotone, which includes the case of a mixture of goods and bads under additive utilities. CEEI exists and remains envy-free and fractionally Pareto-optimal. It is still related to the Nash product of the absolute values of utilities, but this connection becomes more subtle; we explain this subtlety in the case of pure bads and additive utilities. Global extrema of the product do not correspond to equilibrium allocations --- the minimum is attained at an unfair allocation, where one of the agents gets no bads, while the maximum lies on the anti-Pareto frontier. It turns out that CEEI coincides with the local critical points of the Nash product on the Pareto frontier (local minima, maxima, and saddle points). In particular, CEEI for bads does not solve a convex optimization problem, and there can be an exponential number of distinct CEEI. Despite this obstacle, CEEI for bads can be computed in strongly polynomial time if either the number of agents or the number of bads is fixed \citep{branzei2019chores}. The result extends to a mixture of goods and bads \citep{garg2020computing} and to some classes of non-additive utilities \citep{chaudhury2021competitive}.

Usefulness of \emph{agent-object graphs} --- directed bipartite graphs in which the nodes one one side are the agents and the nodes on the other side are the objects --- was recognized by many algorithmic and economic applications. See for example \citet{cole2016convex,barman2018proximity}.

\subsection{Known Worst-case and Average-case Bounds on Sharing.}

The idea of finding fair allocations with a bounded number of shared goods originated from Brams and Taylor \citep{Brams1996Fair,brams2000winwin}.
They suggested the \emph{Adjusted Winner (AW)} procedure, which finds fair and fractionally Pareto-optimal allocation of goods between two agents with additive utilities and at most~$1$ sharing, i.e., with the worst-case optimal number. The AW procedure was applied (at least theoretically) to division problems in divorce cases and international disputes \citep{Brams1996Camp,Massoud2000Fair} and was studied empirically \citep{Schneider2004Limitations,Daniel2005Fair}. The AW procedure heavily relies on the simple structure of fPO allocations for two agents (see the proof of Theorem~\ref{thm:2agents-different-easy} and \cite{moulin2004fair}, Example~7.11a).
Brams and Taylor do not extend their AW procedure to three or more agents.

For $n\geq 3$ agents, the number of sharings was studied in an unpublished manuscript of \citet{wilson1998fair}. 
He proved worst-case bounds on sharing for fairness criteria that may be incompatible with fractional Pareto-efficiency. For example, 
he proved the existence of an \emph{egalitarian}
allocation of goods 
 --- an allocation in which all agents have a largest possible equal utility \citep{pazner1978egalitarian} --- with $n-1$ sharings. Egalitarian allocations of goods are proportional but not necessary envy-free, and may violate efficiency if the valuation matrix has zeros.
For such criteria, the approach based on our Lemma \ref{lem:po-n-1} becomes inapplicable;  Wilson uses a different technique based on linear programming.

Recently, \citet{goldberg2020consensus} studied the problem of minimizing sharing in \emph{consensus halving} --- a partition of  objects into two subsets each of which has a value of exactly half for all agents.

There is a significant gap between worst-case and average-case numbers of sharings; this further stresses the importance of sharing minimization. \citet{Dickerson2014Computational} considered  random instances and demonstrated
 that the minimal number of sharings for an envy-free fPO allocation is zero with high probability, if the number of goods is large and values are independent and identically distributed.
 \citet{ManurangsiSu17} extended the result to allocations among agent groups.


\subsection{Fairness with Indivisible Objects.}
With indivisible objects, envy-free and even proportional allocations may not exist. The {most commonly studied} relaxations of these two concepts are  \emph{Envy-freeness up to one good (EF1)} and \emph{Maximin share guarantee (MMS)}.

EF1 was introduced by \citet{budish2011combinatorial}; a closely related concept was presented earlier by \citet{Lipton2004Approximately}. 
EF1 is widely studied, for example by \citet{Aleksandrov2015Online,oh2019fairly} and others. 
A strengthening of EF1 to a global fairness notion, based on information withholding, was recently studied by \citet{hosseini2020fair}.
Existence of EF1 indivisible Pareto Optimal allocations was proved by \citet{caragiannis2019unreasonable} and \citet{barman2018finding} strengthened the result to fractional PO. 

\citet{budish2011combinatorial} defined MMS, demonstrated existence under large-market assumption and applied the concept in practice for course allocation in \citep{budish2016course}. For ``small markets'', \citet{procaccia2018fair} showed that MMS allocations may not exist for some knife-edge instances and hence all the results about MMS consider a certain approximation to MMS itself, e.g.,~
\citep{aziz2016approximation,amanatidis2017approximation,barman2017approximation,ghodsi2018fair, garg2018approximating,babaioff2019competitive,segal2020competitive,babaioff2021competitive}. 

Classic microeconomics mostly works with divisible resources and handles indivisible objects by making them ``divisible'' via a lottery. This approach results in weaker fairness: an allocation is fair ex-ante, i.e., in expectation before the lottery is implemented. Ex-ante fairness was analyzed in many different contexts. Just to name a few:  \citet{hylland1979efficient}, \citet{abdulkadiroglu1998random},  \citet{Bogomolnaia2001New} considered the problem of fair assignment  and  \citet{budish2013designing} considered its multi-unit constrained  modifications; \citet{kesten2015theory} evaluated fairness of tie-breaking in matching markets; \citet{bogomolnaia2019simple} studied ex-ante fairness under additional randomness in preferences.

\citet{Brams2013TwoPerson} observed that exact envy-freeness with indivisible goods can be achieved by leaving some of them unallocated while keeping some efficiency guarantees: their AL procedure constructs an allocation  that is not Pareto-dominated by another envy-free allocation, see also~\citet{Aziz2015Generalization}.
Recently, \citet{caragiannis2019envy} showed that, by leaving some items unallocated,  it is possible to construct an allocation that is ``Envy-free up to any good'' (also known as EFx, an approximate fairness notion which strengthens EF1 \citep{caragiannis2019unreasonable}).

\citet{bei2020fair} studied an allocation problem where some goods are divisible, and some are indivisible. In contrast to our setting, in their paper, the partition of the set of objects into divisible and indivisible is given in advance --- the algorithm can only divide objects that are predetermined as divisible.

\citet{halpern2019fair}
 suggested a novel approach to achieve exact fairness with indivisible goods: introduce a  \emph{monetary subsidy} by a third party, while minimizing the transfers. This approach is further developed in the follow-up papers~\citep{brustle2019one,caragiannis2020computing}. Minimization makes it methodologically similar to our approach and distinguishes it from other results on fair allocation of indivisible goods with monetary transfers, e.g., the rent-division problem \citep{Su1999Rental,gal2017fairest}.
In practical cases of real-estate allocation,
it is common to allow both sharing and side-payments, 
and the goal is to minimize both of them
 \citep{nizri2021private}. 
No efficient solution to this problem seems to be currently known.

Another way of bridging the literature on divisible and indivisible objects was proposed by \citet{freeman2020best} and \citet{aziz2020simultaneously}. They interpret fractional allocations as  randomization over integral ones and distinguish  ex-ante fairness of a fractional allocation and ex-post fairness of an indivisible allocations from the support of the lottery. The papers provide algorithms for computing an allocation that is ex-ante envy-free and ex-post envy-free up to one good (without any guarantees of economic efficiency).
The fractional allocation generated by our algorithm can naturally be interpreted as a probabilistic allocation. Viewed like that, 
minimizing the number of shared objects is equivalent
to minimizing the number of objects whose allocation is affected by randomization, subject to being ex-ante fair and fPO.
However, the resulting allocations may not be ex-post fair.
Minimization of the number of randomized objects subject to both ex-ante and ex-post fairness is an interesting open problem.

\subsection{Checking Existence of Fair Allocations of Indivisible Objects.}
Fair allocation of indivisible goods might not exist in all cases, but may exist in some. A natural question is how to decide whether it exists in a given instance.
It was studied by \citet{Lipton2004Approximately,deKeijzer2009Complexity,bouveret2016characterizing} for various fairness and efficiency notions, showing that it is computationally hard in general (with some exceptions).  The \emph{undercut procedure}  of \citet{Brams2012Undercut}  finds an envy-free allocation of indivisible goods among two agents with monotone (not necessarily additive) valuations, if-and-only-if it exists (see also \citet{Aziz2015Note}).

\subsection{Cake-cutting with Few Cuts.}
The goal of minimizing the number of ``cuts'' has also been studied in the context of \emph{fair cake-cutting} --- dividing a heterogeneous continuous resource, see 
\citet{Webb1997How,Shishido1999MarkChooseCut,Barbanel2004Cake,Barbanel2014TwoPerson,alijani2017envy,seddighin2018expand,SegalHalevi2018Entitlements,crew2019disproportionate}. 
Since the resource is continuous, the techniques and results are quite different.

\subsection{Fair Division with Mixed Valuations.}
Most of the literature on fair division deals with either goods or bads. Recently, some papers have studied mixed valuations, where objects can be good for some agents and bad for others. This setting was first studied by \citet{Bogomolnaia2017Competitive} for divisible objects and quite general class of utilities.
It was later investigated by  
\citet{segal2018fairly,meunier2018envy,avvakumov2019envy} for a heterogeneous divisible ``cake'',
and by \citet{aziz2018fair,aleksandrov2020jealousy,aziz2020polynomial,aleksandrov2020two} for indivisible objects and approximate fairness.

\section{Extensions and Future Work}
The sharing-minimization approach introduced in this paper provides a compelling alternative to existing approaches for those fair division problems, where objects can technically be shared, but this sharing is unwanted as happens in many practical cases. Our approach guarantees exact fairness and hence improves upon approximate fairness (unavoidable if objects are treated as indivisible) and often produces fewer sharings than benchmark  rules  
for divisible objects such as CEEI.

The idea behind our sharing-minimization algorithm from Theorem~\ref{thm:nagents-different-easy} is quite general. The algorithm can easily be adapted for optimizing other objectives over the set of fPO allocations provided that, for a given consumption graph, the optimization reduces to an LP. For example, one can minimize the number of shared objects, the number of sharings in a given subset of objects $S\subset [m]$ (where $S$ is a subset of objects that are particularly ``hard to share''), the total value of shared objects $\sum_{i\in[n]}\sum_{o \  \mathrm{shared \ by } \ i}|v_{i,o}|$, or the maximum number of agents who share a single object.
For these objectives, 
 step \ref{item:upperbound} in the algorithm is to be skipped (i.e., fPO graphs with more than $n-1$ sharings cannot be discarded).
Still, by Remark \ref{rem:sharing-upper-bound}, the number of sharings in any fPO allocation is at most 
$\frac{n(n-1)}{2}(\D{v}+1)$.
Hence, the total number of variables and constraints in the LP from step~\ref{item:lp} is independent of $m$, the LP is solvable in constant time (when $n$ is fixed), and the complexity is dictated by the time needed to enumerate all fPO graphs, i.e., the bound on the run time from item (a) of Theorem~\ref{thm:nagents-different-easy} holds.

Similarly, instead of envy-freeness one can use other fairness notions, such as weighted envy-freeness
\citep{reijnierse1998finding,branzei2019chores} or weighted proportionality capturing different entitlements, 
or any other fairness notion that can be represented by a constant (independent of $m$) number of linear inequalities on the allocation matrix.
If there exists an fPO allocation satisfying the chosen fairness notion, our algorithm will find it. 
Otherwise, the algorithm will indicate that such an allocation does not exist.
\medskip 

Some directions for future research that we find promising:
\begin{itemize}
\item We expect that there is room for substantial improvement of the algorithm's run time in practice. Presumably, the theoretical exponent in the run time can also be improved by replacing the exhaustive enumeration of the Pareto frontier by a more clever procedure not exploring regions that cannot contain the optimum; see Section~\ref{subsect:practical_cons} for some ideas along these lines. Understanding such improvements can make the algorithm more practical for many-agent problems. If the number of agents exceeds the number of objects, an efficient algorithm can presumably be obtained by invoking agent-object parity from \citep{branzei2019chores}.   
\item When sharing is especially problematic, further improvement in the number of sharings can be achieved by tweaking the efficiency requirement of fPO. 
For example, one can allow for allocations with bounded inefficiency quantified using the loss in utilitarian or Nash social welfare, or by replacing the requirement of  fractional PO by discrete PO. For most of such modifications, sharing minimization is likely to becomes computationally hard, and it is especially interesting to find modifications for which it does not.
A related direction is finding domains of preferences beyond additive valuations, where sharing minimization remains tractable and/or there is a  non-trivial worst-case  bound on the number of sharings.
\item An alternative more economic approach to sharing-minimization would be to assume that sharing causes disutility to agents, thus making the sharing-minimization objective hardwired in economic efficiency.
The biggest challenge is to find a  domain of sharing-averse preferences, where the existence of a fair efficient allocation is guaranteed.
\end{itemize}

\ACKNOWLEDGMENT{
This work was inspired by Achikam Bitan, Steven Brams and Shahar Dobzinski, who expressed their dissatisfaction with the current trend of approximate-fairness (SCADA conference, Weizmann Institute, Israel 2018). We are grateful to participants of De Aequa Divisione Workshop on Fair Division (LUISS, Rome, 2019), Workshop on Theoretical Aspects of Fairness (WTAF, Patras, 2019) and the rationality center game theory seminar (HUJI, Jerusalem, 2019) for their helpful comments.  Suggestions of Herve Moulin, Antonio Nicolo, and Nisarg Shah were especially useful. We also thank Nisarg for sharing with us the data from \url{spliddit.org}.

We are grateful to Eliyahu Satat and Daniel Abergel, undergraduate students at Ariel University, for the Python implementation of our algorithm at 
\url{https://github.com/DanielAbergel/Distribution-Algorithm}, and for their help in using the code.

Several members of the theoretical-computer-science stack-exchange network (\url{http://cstheory.stackexchange.com}) provided very helpful answers, in particular: 
D.W, Peter Taylor, 
	Gamow, Sasho Nikolov,
	Chao Xu, Mikhail Rudoy
	and xskxzr%
	. 
	
	We are grateful to anonymous referees of WTAF 2019, EC 2020 and Operations Research journal for their very helpful comments. We are grateful to Lillian Bluestein for proofreading the early version.
	
	Fedor's work was supported by the Ministry of Science and Technology grants \#19400214 and \#2028255, the Lady Davis Foundation, 
	Grants 16-01-00269 and 19-01-00762 of the Russian Foundation for Basic Research,  the European Research Council (ERC) under the European Union's Horizon 2020 research and innovation program (grant agreement n$\degree$740435), and the Basic Research Program of the National Research University Higher School of Economics.
	Erel is supported by the Israel Science Foundation (grant 712/20).
}

\section*{Biographies}
\begin{description}
\item[\textbf{Fedor Sandomirskiy}]  is a postdoc at Caltech. His specialization is in game theory and its applications to problems of economic design, with a focus on the strategic use of information and resource allocation mechanisms. Fedor earned a PhD from the Russian Academy of Sciences and worked as a postdoc at Technion before joining Caltech. His research was recognized by  EC2020 best paper award, Ovsievich and Deych prizes, and multiple grants.

\item[\textbf{Erel Segal-Halevi}] 
has started to study fair division  after reading the Biblical commandment to ``divide the land equally'' (Ezekiel 47:14).
This study has lead him to do a Ph.D. in computer science in the Bar-Ilan University (2013--2017), focusing on algorithms for fair division of land.
He is currently a faculty member in the Ariel University, where he is working on new fair division algorithms, as well as teaching his students how to use the existing ones.
\end{description}

\newpage

\ifdefined\OPRE
\begin{APPENDICES}
\else
\appendix
\fi

\section{Characterization of Fractional Pareto-Optimality.}
\label{sec:po-cycle}

In this section we prove Lemma \ref{lem:po-cycle} and Lemma~\ref{lem:po-weights} together:
\begin{itemize}
\item \emph{An allocation $\mathbf{z}$ is fractionally Pareto-optimal if-and-only-if 
it is non-malicious and its directed consumption graph $\dcg{z}$ has no cycle $C$ with $\pi(C)<1$.}
\item \emph{An allocation $\mathbf{z}$ is fractionally Pareto-optimal if and only if there is a vector of weights  $\lambda=(\lambda_i)_{i\in [n]}$ with $\lambda_i>0$ such that for all agents $i\in[n]$ and objects $o\in[m]$:}
\begin{equation}
\label{eq:lm-weights-appendix}
z_{i,o}>0 \ \ \mbox{implies} \ \  \lambda_i v_{i,o}\geq \lambda_j v_{j,o} \ \ \mbox{ for any agent} \ \ j\in [n].
\end{equation}
\end{itemize}
\begin{proof}[Proof that ${fPO}$ $\implies$ no $C$ and no maliciousness]
If an allocation is malicious, then reallocating objects in a non-malicious way strictly improves the utilities of some agents without harming the others. Thus, an fPO allocation $\mathbf{z}$ must be non-malicious.

We now show that there are no directed cycles $C = (i_1 \to o_1 \to i_2 \to o_2 \to\allowbreak \ldots \to \allowbreak i_L \to o_L \to i_{L+1} = i_1)$  in $\dcg{z}$ with $\pi(C)<1$.
 Assume, by contradiction, that $C$ is such a cycle. We show how to construct an exchange of objects among the agents in $C$ such that their utility strictly increases without affecting the other agents. This will contradict the Pareto-optimality of $\mathbf{z}$.

Define $R := \pi(C)^{1/L}$; by assumption, $R < 1$.

For each $k\in[L]$, there is an edge from agent $i_k$ to object $o_k$. Hence, by the definition of \dcg{z}
\begin{itemize}
\item either $i_k$ consumes a positive amount of $o_k$ and both $i_k$ and $i_{k+1}$ agree that $o_k$ is a good ($v_{i_k,o_k}>0$ and $v_{i_{k+1},o_k}>0$),
\item or $i_{k+1}$ has  a positive amount of $o_k$ and both $i_k$ and $i_{k+1}$ agree that $o_k$ is a bad ($v_{i_k,o_k}<0$ and $v_{i_{k+1},o_k}<0$).
\end{itemize}
 Suppose each $i_k$  gives a small positive amount $ \varepsilon_k$ of $o_k$ to $i_{k+1}$ in the case of a good  or $i_{k+1}$
 gives $ \varepsilon_k$ fraction of $o_k$ to $i_k$ in the case of a bad ($ \varepsilon_k \in (0,h_k]$ where $h_k=z_{i_k,o_k}$ for a good and $h_k=z_{i_{k+1},o_k}$ for a bad).
 Then, agent $i_k$ loses a utility of $ \varepsilon_{k} \cdot |v_{i_k,o_k}|$, but gains  $ \varepsilon_{k-1} \cdot |v_{i_k, o_{k-1}}|$ from the previous agent, so the net change in the utility of $i_k$ is 
$ \varepsilon_{k-1} |v_{i_k, o_{k-1}}| 
-  \varepsilon_{k} |v_{i_k,o_k}|$ (where the arithmetic on the indices $k$ is done modulo $L$ in a way that the index is always in $\{1,\ldots,L\}$). To guarantee that all agents in $C$ strictly gain from the exchange, it is sufficient to choose $ \varepsilon_1,\ldots, \varepsilon_k$ such that the following inequalities hold for all $k\in[L]$:
\begin{equation}\label{ineq:k}
 \varepsilon_{k-1} |v_{i_k, o_{k-1}}| -  \varepsilon_{k} |v_{i_k,o_k}|>0 \iff
\frac{ \varepsilon_{k}}{ \varepsilon_{k-1}} <  \frac{|v_{i_k,o_{k-1}}|} {|v_{i_k, o_{k}}|}.
\end{equation}

For any $ \varepsilon_1 > 0$, define $ \varepsilon_k=  \varepsilon_{k-1} \cdot R\cdot \frac{|v_{i_k, o_{k-1}}|}{|v_{i_k,o_k}|}$ for $k\in\{2,\ldots,L\}$.
Since $R<1$, the inequality~\eqref{ineq:k} is satisfied for each $k\in\{2,\ldots,L\}$. It remains to show that it is satisfied for $k=1$, too (note that in this case $k-1=L$). Indeed:

$$ \varepsilon_L=  \varepsilon_1 \cdot R^{L-1} \cdot 
\prod_{k=2}^{L} \frac{|v_{i_k, o_{k-1}}|}{|v_{i_k,o_k}|}= \varepsilon_1 \cdot R^{L-1} \cdot 
\frac{|v_{i_1,o_1}|}{|v_{i_1, o_{L}}|}\prod_{k=1}^{L} \frac{|v_{i_k, o_{k-1}}|}{|v_{i_k,o_k}|}= \varepsilon_1\frac{R^{L-1}}{\pi(C)} \cdot 
\frac{|v_{i_1,o_1}|}{|v_{i_1, o_{L}}|}= \varepsilon_1 R^{-1}\frac{|v_{i_1,o_1}|}{|v_{i_1, o_{L}}|}.$$
Thus

$$\frac{ \varepsilon_1}{ \varepsilon_L} = R\frac{|v_{i_1, o_{L}}|}{|v_{i_1,o_1}|}<\frac{|v_{i_1, o_{L}}|}{|v_{i_1,o_1}|}.$$
By choosing $ \varepsilon_1$ sufficiently small, we guarantee $ \varepsilon_k \leq  h_k$ {for all $k\in[L]$}, so this trade is possible.
\end{proof}

\begin{proof}[Proof that no $C$ and no maliciousness $\implies$  existence of $\lambda$]
We assume that \dcg{z} contains no directed cycles $C$ with $\pi(C)<1$ and $\mathbf{z}$ is non-malicious. We prove the existence of weights $\lambda_i>0$ from Lemma~\ref{lem:enumerate-level2}. 

Add directed edges $i\to j$ between each pair of distinct agents $i,j\in[n]$. All the new edges have the same large positive weight in order to ensure that the new graph $\vec{G}$ has no cycles $C=(v_1\to v_2\to\dots\to v_{L+1}=v_1)$  with multiplicative weight $\pi(C)<1$. It is enough to pick

 $$w_{i\to j}=\left(\max\left\{1, \ |v_{k,o}|,\ \frac{1}{|v_{k,o}|} \ : k\in [n],\ o\in [m],\ v_{k,o}\ne 0    \right\}\right)^{2(n-1)}.$$
Indeed, any simple cycle $C$ containing a new edge has at most $2(n-1)$ old edges; if none of the old edges has weight zero, then $\pi(C)\geq 1$  by the definition of $w_{i\to j}$. 
Edges $k\to o$ with $w_{k\to o}=0$ cannot be a part of any cycle: by the definition of \dcg{z} such edges are possible only if $k$ consumes $o$ and $v_{k,o}=0$. Since $\mathbf{z}$ is non-malicious, such $o$  has no outgoing edges.


Fix an arbitrary agent, say, agent~$1$. For every other agent $j\in [n]$, let $P_{1,j}$ be a directed path from $1$ to $j$ in $\vec{G}$, for which the product $\pi(P_{1,j})$ is minimal. 
The minimum is well-defined and is attained on an acyclic path, since by the construction there are no cycles with a product smaller than $1$, so adding cycles to a path cannot make its product smaller.

Set the \emph{weight} of each agent $j$ as $\lambda_j := \pi(P_{1,j})$ (in particular $\lambda_1 = 1$). 
{We now show that these weights satisfy} the conditions~\eqref{eq:lm-weights-appendix}, namely: $z_{i,o}>0$ implies $\lambda_i v_{i,o}\geq\lambda_j v_{j,o}$ for all $j\in[n]$.
W.l.o.g., we can assume  that $i\ne j$ and both agents agree whether $o$ is a good or a bad, i.e., $v_{i,o}\cdot v_{j,o}>0$. Indeed, if agents disagree, then by the non-maliciousness, \eqref{eq:lm-weights-appendix} is satisfied with any  $\lambda_i,\lambda_j>0$.

In the case of a good ($v_{i,o}>0$ and $v_{j,o}>0$), there is an edge $i\to o$ (since $i$ consumes~$o$) and $o\to j$ (since $v_{j,o}>0$ and $z_{j,o}\ne 1$). Consider the optimal path $P_{1,i}$ and the concatenated path $Q_{1,j}= P_{1,i} \to o  \to j$.
The path $P_{1,j}$ has the minimal product among all paths from $1$ to $j$.  Therefore,

$$\pi(Q_{1,j}) \geq \pi(P_{1,j})\Longleftrightarrow \pi(P_{1,i})\cdot \frac{v_{i,o}}{v_{j,o}} \geq \pi(P_{1,j})\Longleftrightarrow \lambda_i v_{i,o}\geq \lambda_j v_{j,o}.$$
The mirror argument for a bad (both $v_{i,o}$ and $v_{j,o}$ are negative) is as follows. There is an edge $j\to o$ (because $v_{j,o}<0$ and $z_{j,o}\ne 1$) and $o\to i$ (since $i$ consumes $o$ and $o$ is a bad).  We define $Q_{1,i}$ as  $P_{1,j} \to o  \to i$ and get

$$\pi(Q_{1,i}) \geq \pi(P_{1,i})\Longleftrightarrow \pi(P_{1,j})\cdot \frac{|v_{j,o}|}{|v_{i,o}|} \geq \pi(P_{1,i})\Longleftrightarrow \lambda_j |v_{j,o}|\geq \lambda_i |v_{i,o}|\Longleftrightarrow \lambda_i v_{i,o}\geq \lambda_j v_{j,o}.$$
\end{proof}
\begin{proof}[Proof that existence of $\lambda$ $\implies$ fPO]
If in an allocation $\mathbf{z}$ each object $o$ is consumed by  agents $i$ with highest $\lambda_i v_{i,o}$, then  $\mathbf{z}$ itself maximizes the weighted sum of utilities $\sum_{i\in [n]} \lambda_i u_i(\mathbf{z}_i)$ over all allocations. Since  all $\lambda_i$ are positive, $\mathbf{z}$ is fPO because any Pareto-improvement must increase the weighted sum of utilities as well.
\end{proof}

Lemma \ref{lem:po-cycle} has a useful computational implication, Lemma \ref{lem:po-check}: 
\emph{
It is possible to decide in time {$O(nm(n+m))$} whether a given allocation $\mathbf{z}$ is fractionally Pareto-optimal.
}
\begin{proof}
The idea is the following: construct the graph $\dcg{z}$, replace each weight with its logarithm, 
and look for a negative cycle using one of many existing algorithms \citep{cherkassky1999negative} (e.g., Bellman-Ford). If there is a cycle $C$ in which the sum of log-weights is negative, then $\pi(C)<1$, so by Lemma \ref{lem:po-cycle}, $\mathbf{z}$ is not fractionally PO. Otherwise, $\mathbf{z}$ is fractionally PO. 
A negative cycle can be found in time $O(|V|\cdot|E|)$. Here $|V| = m+n$ and $|E|\leq mn$. 

Because of irrationality, logarithms can be computed only approximately and thus, to ensure the correctness of the algorithm, one has to adjust the quality of approximation depending on the input. However, these difficulties are easy to avoid by using a multiplicative version of 
any of the algorithms of \citet{cherkassky1999negative}:
multiplication replaces addition, division is used instead of subtraction, and one instead of zero.%
%

This allows one to avoid logarithms and keep the same bound of $O(nm(n+m))$ on runtime.
\end{proof}

\section{Worst-case Bound on Sharing.}
\label{sec:po-n-1}
In this section we prove Lemma \ref{lem:po-n-1}:
\emph{For any allocation $\mathbf{z}$, there exists
	a fractionally Pareto-optimal allocation $\mathbf{z^*}$ such that:
	\begin{itemize}
		\item (a) $\mathbf{z^*}$ weakly Pareto dominates $\mathbf{z}$, i.e., for any agent $i$, $u_i(\mathbf{z^*_i})\geq u_i(\mathbf{z_i})$.
		\item (b) the non-directed consumption graph $\ucg{z^*}$ is acyclic.
		\item (c) $\mathbf{z^*}$ has at most $n-1$ sharings (hence at most $n-1$ shared objects).
	\end{itemize}
	Such allocation $\mathbf{z^*}$ can be constructed in time $O(n^2 m^2 (n+m))$. }
\begin{proof}
If $\mathbf{z}$ is malicious, reallocate the objects: 
\begin{itemize}
	\item for each $o\in [m]$ with  $\max_{i\in [n]} v_{i,o}>0$, reallocate the shares of agents $j$ with $v_{j,o}\leq 0$ to an agent $i$ with $v_{i,o}>0$;
	\item for each $o\in [m]$ with $\max_{i\in [n]} v_{i,o}=0$, reallocate the shares of agents $j$ with $v_{j,o}<0$ to an agent $i$ with $v_{i,o}=0$.
\end{itemize}
Denote the resulting non-malicious allocation by $\mathbf{z}'$.

Let's call a cycle $C=(i_1 \to o_1 \to i_2 \to o_2 \to \allowbreak \ldots \to \allowbreak i_L \to o_L \to i_{L+1} = i_1)$ in the directed graph 
$\dcg{z'}$ 
\emph{simple} if each node is visited at most once and for any $i\in[n]$ and $o\in[m]$  only one of the edges $i\to o$ or $o\to i$ is contained in the cycle.

If there is a simple cycle $C$ in $\dcg{z'}$  with $\pi(C)\leq 1$, then $C$ can be eliminated by the cyclic trade making all the agents weakly better off (similarly to the  proof of Lemma~\ref{lem:po-cycle} in Appendix~\ref{sec:po-cycle}). Since both edges $i_k\to o_k$ and $o_k\to i_{k+1}$ exist in \dcg{z'}, the values $v_{i_{k},o_k}$ and $v_{i_{k+1},o_k}$ are both non-zero and have the same sign. We conduct the following transfers:
\begin{itemize}
	\item if $v_{i_{k},o_k}>0$ and $v_{i_{k+1},o_k}>0$ (i.e., $o_k$ is a good for $i_k$ and $i_{k+1}$), then take $\varepsilon_k$ amount of 
	$o_k$ from $i_k$ and give it to $i_{k+1}$ ($0<\varepsilon_k\leq h_k,$ where $ h_k =z_{i_k,o_k}$);
	\item if $v_{i_{k},o_k}<0$ and $v_{i_{k+1},o_k}<0$ (i.e., $o_k$ is a bad), then transfer $\varepsilon_k$ of $o_k$ from $i_{k+1}$ to $i_k$ ($0<\varepsilon_k\leq h_k=z_{i_{k+1},o_k}$).
\end{itemize}
The amounts $\varepsilon_k$ are selected in such a way that
 $\varepsilon_k |v_{i_k,o_k}|=\varepsilon_{k+1} |v_{i_k,a_{k+1}}|$ for $k\in[L-1]$. Hence, each agent $i_k$, $k=2,\dots,L$,  remains indifferent between the old and the new allocations while agent $i_1$ is weakly better off because of the condition $\pi(C)\leq 1$. We select epsilons as big as possible:
  $$\varepsilon_k=\frac{\prod_{q=1}^{k-1}  \frac{|v_{i_q,a_q}|}{|v_{i_q,a_{q+1}}|}}{\min_{l\in[L]} \left(\frac{1}{h_l} \prod_{q=1}^{l-1}  \frac{|v_{i_q,a_q}|}{|v_{i_q,a_{q+1}}|} \right)}, \qquad k\in [L],$$
  thus eliminating one of the edges $i_k\to o_k$ in \dcg{z'}:

Repeat this procedure again and again until there are no simple cycles with $\pi(C)\leq 1$. Note that we need at most $(n-1)m$ repetitions since each time at least one edge is deleted in the undirected graph $\ucg{z}$ and the total number of edges is at most $n\cdot m$. Denote the resulting allocation by $\mathbf{z^*}$. 

(a) By construction, $\mathbf{z^*}$ weakly improves the utility of each agent, is non-malicious, and has no cycles with $\pi(C)<1$.  Thus, $\mathbf{z^*}$ is fractionally Pareto-optimal by Lemma~\ref{lem:po-cycle}. 

(b) The undirected consumption graph of $\mathbf{z^*}$ is acyclic. Assume by contradiction that there is a cycle $C$ in $\ucg{z^*}$. Then in the directed graph $\dcg{z^*}$ there are two cycles: $C$ passed in one direction and in the opposite. Denote them by $\overleftarrow{C}$ and $\overrightarrow{C}$. Since $\pi(\overrightarrow{C})=\frac{1}{\pi(\overleftarrow{C})}$, by fractional-Pareto-optimality we get $\pi(\overrightarrow{C})=\pi(\overleftarrow{C})=1$; however all such cycles were eliminated in the previous stages of the algorithm.

(c) 
Since any acyclic graph on $m+n$ nodes has at most $m+n-1$ edges,  and the number of sharings equals the number of edges in \ucg{z^*} minus $m$,  the number of sharings at $\mathbf{z^*}$ is at most $n-1$.

It remains to estimate the complexity of the algorithm. 
Constructing the non-malicious allocation $\mathbf{z'}$ takes $O(n\cdot m)$ and the overall complexity is determined by the time needed to perform cyclic trades.
 Cycles with $\pi(C)<1$ can be found using the multiplicative modification of the Bellman-Ford algorithm, as in Lemma~\ref{lem:po-cycle} which results in $O(nm(n+m))$ operations per cycle. For a given cycle $C$ of length $L$ transfers are conducted in $O(L)=O(\min\{n,m\})$ since no simple cycle is longer than $2\min\{m,n\}$.
 
When all cycles $C$ with $\pi(C)<1$ have been eliminated, it remains to delete all the cycles in the \emph{undirected} consumption graph if any (note that all such cycles have $\pi(C)=1$). Such cycles can be found using a depth-first search which  needs $O(|V|+|E|)=O(n\cdot m)$ operations per cycle.

The total number of cycles to be eliminated is at most $(n-1)m$ and we get the upper bound $O(n^2 m^2 (n+m))$ for the overall time-complexity. 
\end{proof}


\ifdefined\OPRE
\end{APPENDICES}
\fi

\newpage

\bibliographystyle{informs2014} 

\end{document}